\newcommand{\tinyspace}{\mspace{1mu}}
\newcommand{\abs}[1]{\left\lvert\tinyspace #1 \tinyspace\right\rvert}
\newcommand{\norm}[1]{\left\lVert\tinyspace #1 \tinyspace\right\rVert}
\newcommand{\setft}[1]{\mathrm{#1}}
\newcommand{\density}[1]{\setft{D}\left(#1\right)}
\newcommand{\unitary}[1]{\setft{U}\left(#1\right)}
\newcommand{\pd}[1]{\setft{Pd}\left(#1\right)}
\newcommand{\supp}{{\operatorname{supp}}}
\def\dif{\mathrm{d}}
\def\complex{\mathbb{C}}
\def\real{\mathbb{R}}
\def\I{\mathbb{1}}
\newenvironment{mylist}[1]{\begin{list}{}{
    \setlength{\leftmargin}{#1}
    \setlength{\rightmargin}{0mm}
    \setlength{\labelsep}{2mm}
    \setlength{\labelwidth}{8mm}
    \setlength{\itemsep}{0mm}}}
    {\end{list}}
\def\ot{\otimes}
\newcommand{\out}[2]{| #1\rangle\langle #2 |}
\newcommand{\Innerm}[3]{\left\langle #1 \left| #2 \right| #3 \right\rangle}
\newcommand{\defeq}{\stackrel{\smash{\textnormal{\tiny def}}}{=}}
\newcommand{\pa}[1]{(#1)}
\newcommand{\Pa}[1]{\left(#1\right)}
\newcommand{\Br}[1]{\left[#1\right]}
\newcommand{\set}[1]{\{#1\}}
\newcommand{\Set}[1]{\left\{#1\right\}}
\newcommand{\bra}[1]{\langle#1|}
\newcommand{\ket}[1]{|#1\rangle}
\DeclareMathOperator{\trace}{Tr}
\newcommand{\ptr}[2]{\trace_{#1}\pa{#2}}
\newcommand{\Ptr}[2]{\trace_{#1}\Pa{#2}}
\newcommand{\Tr}[1]{\Ptr{}{#1}}
\def\cH{\mathcal{H}}
\def\cM{\mathcal{M}}
\def\rS{\mathrm{S}}
\def\A{\textsf{A}}
\newtheorem{thrm}{Theorem}[section]
\newtheorem{lem}[thrm]{Lemma}
\newtheorem{prop}[thrm]{Proposition}
\newtheorem{cor}[thrm]{Corollary}
\theoremstyle{definition}
\newtheorem{remark}[thrm]{Remark}
\numberwithin{equation}{section}
\newcounter{questionnumber}
\begin{document}

\title{\Large A strengthened monotonicity inequality of quantum relative entropy: \\A unifying approach via R\'{e}nyi relative entropy}

\author{Lin Zhang\footnote{E-mail: godyalin@163.com;
linyz@zju.edu.cn}\\
  {\small\it Institute of Mathematics, Hangzhou Dianzi University, Hangzhou 310018, PR~China}}
\date{}
\maketitle
\maketitle \mbox{}\hrule\mbox\\
\begin{abstract}

We derive a strengthened monotonicity inequality for quantum
relative entropy by employing properties of $\alpha$-R\'{e}nyi
relative entropy. We develop a unifying treatment towards the
improvement of some quantum entropy inequalities. In particular, an
emphasis is put on a lower bound of quantum conditional mutual
information (QCMI) as it gives a Pinsker-like lower bound for the
QCMI. We also give some improved entropy inequalities based on
R\'{e}nyi relative entropy. The inequalities obtained, thus, extends
some well-known ones. We also obtain a condition under which a
tripartite operator becomes a Markov state. As a by-product we
provide some trace inequalities of
operators, which are of independent interest.\\~\\
\textbf{Mathematics Subject Classification.}  47A63, 15A90, 46N50,
46L30, 81Q10. \\
\textbf{Keywords.} relative entropy, quantum channel, strong
subadditivity, R\'{e}nyi relative entropy.
\end{abstract}
\maketitle \mbox{}\hrule\mbox\\

\section{Introduction}

To begin with, let us fix some notations. Let $\cH_d$ be a
$d$-dimensional complex Hilbert space. A \emph{quantum state} $\rho$
on $\cH_d$ is a positive semi-definite operator of trace one and in
particular, the operator $\rho = \out{\psi}{\psi}$ is said to be a
\emph{pure state} for each unit vector $\ket{\psi} \in \cH_d$. The
set of all quantum states on $\cH_d$ is denoted by
$\density{\cH_d}$. For each quantum state $\rho\in\density{\cH_d}$,
its von Neumann entropy is defined by $\rS(\rho) := -
\Tr{\rho\log\rho}$. Here and in remaining parts, all the logarithms
are taken with respect to the natural base $e$. The \emph{relative
entropy} of two mixed states $\rho$ and $\sigma$ is defined by
\begin{eqnarray}
\rS(\rho||\sigma) := \left\{\begin{array}{ll}
                             \Tr{\rho(\log\rho -
\log\sigma)}, & \text{if}\ \supp(\rho) \subseteq
\supp(\sigma), \\
                             +\infty, & \text{otherwise}.
                           \end{array}
\right.
\end{eqnarray}
Here $\supp(\rho)$ ($\supp(\sigma)$) means the support set of $\rho$
($\sigma$). A \emph{quantum channel} $\Phi$ over $\cH_d$ is defined
as a trace-preserving completely positive linear map over the set
$\density{\cH_d}$. It follows that there exist linear operators
$\set{K_\mu}_\mu$ on $\cH_d$ such that $\sum_\mu K^\dagger_\mu K_\mu
= \I$ and $\Phi = \sum_\mu \mathrm{Ad}_{K_\mu}$, where
$\mathrm{Ad}_{K_\mu}(X):=K_\mu XK^\dagger_\mu$, that is, for each
quantum state $\rho$, we have the Kraus representation $\Phi(\rho) =
\sum_\mu K_\mu \rho K^\dagger_\mu$. A well-known property of quantum
relative entropy is its monotonicity under generic quantum channels.
That is,
\begin{eqnarray}
\rS(\rho||\sigma) \geqslant \rS(\Phi(\rho)||\Phi(\sigma)).
\end{eqnarray}
The condition of equality in the above equation is an interesting
and important subject. An extremely important result in quantum
information theory, namely, the saturation of monotonicity
inequality of relative entropy under a generic quantum channel, is
provided by Petz \cite{Petz1988} and we restate it below.
\begin{prop}[Petz, \cite{Hiai2011,Petz1988}]
Let $\rho,\sigma\in\density{\cH_d}$ and $\Phi$ be a quantum channel
defined over $\cH_d$. If $\supp(\rho)\subseteq\supp(\sigma)$, then
\begin{eqnarray}
\rS(\rho||\sigma) = \rS(\Phi(\rho)||\Phi(\sigma))\quad\text{if and
only if}\quad \Phi^*_\sigma\circ\Phi(\rho) = \rho,
\end{eqnarray}
where $\Phi^*_\sigma =
\mathrm{Ad}_{\sigma^{1/2}}\circ\Phi^*\circ\mathrm{Ad}_{\Phi(\sigma)^{-1/2}}$,
and $\Phi^*$ is the dual of $\Phi$ with respect to Hilbert-Schmidt
inner product over the operator space on $\cH_d$, i.e.
$\Tr{\Phi^*(X)Y}=\Tr{X\Phi(Y)}$ for all operators $X,Y$ on $\cH_d$.
\end{prop}

The well-known strong subadditivity (SSA) inequality of quantum
entropy, obtained by Lieb and Ruskai in \cite{Lieb1973},
\begin{eqnarray}\label{eq:SSA-1}
\rS(\rho_{ABC}) +
\rS(\rho_B) \leqslant \rS(\rho_{AB}) + \rS(\rho_{BC}),
\end{eqnarray}
is a ubiquitous result in quantum information theory. It is known
that SSA is equivalent to the monotonicity inequality of quantum
relative entropy. Based on SSA, a new concept--conditional mutual
information--is proposed by mimicking classical one. It measures the
correlations of two quantum systems relative to a third: Given a
tripartite state $\rho_{ABC}\in\density{\cH_{ABC}}$, where
$\cH_{ABC}:=\cH_A\ot\cH_B\ot\cH_C$, it is defined as
\begin{eqnarray}
I(A:C|B)_\rho := \rS(\rho_{AB})+ \rS(\rho_{BC}) - \rS(\rho_{ABC}) -
\rS(\rho_B).
\end{eqnarray}
Clearly conditional mutual information is nonnegative by SSA. Thus,
getting a lower bound on conditional mutual information is
equivalent to the tightening of SSA and is an important line of
research. Hence, characterization of vanishing conditional mutual
information is a first step to this problem.

Ruskai is the first one to discuss the equality condition of SSA,
i.e. vanishing conditional mutual information. By analyzing the
equality condition of Golden-Thompson inequality, she obtained the
following characterization \cite{Ruskai2002}:
\begin{eqnarray}
I(A:C|B)_\rho = 0 \Longleftrightarrow \log\rho_{ABC} + \log\rho_B =
\log\rho_{AB} + \log\rho_{BC}.
\end{eqnarray}
Throughout the present paper, we have suppressed implicit tensor
products with the identity by conventions. For example,
$\log\rho_{AB}$ means $(\log\rho_{AB})\ot\I_C$, where
$\rho_{AB}=\ptr{C}{\rho_{ABC}}$ is the reduced state (or density
operator) of the system $AB$.

Later on, using the relative modular approach established by Araki,
Petz gave another characterization of the equality condition of SSA
\cite{Petz2003}:
\begin{eqnarray}
I(A:C|B)_\rho = 0 \Longleftrightarrow
\rho^{\mathrm{i}t}_{ABC}\rho^{-\mathrm{i}t}_{BC} =
\rho^{\mathrm{i}t}_{AB} \rho^{-\mathrm{i}t}_B\quad(\forall
t\in\real),
\end{eqnarray}
where $\mathrm{i} = \sqrt{-1}$ is the imaginary unit. From this, we
see that $I(A:C|B)_\rho > 0$ if and only if $
\rho^{\mathrm{i}t}_{ABC} \neq \rho^{\mathrm{i}t}_{AB}
\rho^{-\mathrm{i}t}_B\rho^{\mathrm{i}t}_{BC}$ for all $t\in\real$.
Therefore, comparing both $I(A:C|B)_\rho$ and
$\norm{\rho^{\mathrm{i}t}_{ABC} - \rho^{\mathrm{i}t}_{AB}
\rho^{-\mathrm{i}t}_B\rho^{\mathrm{i}t}_{BC}}$, where $\norm{*}$ is
a metric over the unitary group, is mathematically interesting. For
instance, they have a property in common: If
$\rho,\sigma\in\density{\cH_d}$, then $\norm{\rho -
\sigma}_1\leqslant 2$ and $\norm{\rho^{\mathrm{i}t} -
\sigma^{\mathrm{i}t}}_\infty\leqslant 2$, where $\norm{*}_\infty$ is
the spectral norm of an operator.

Hayden \emph{et al.} in \cite{Hayden2004} have shown that
$I(A:C|B)_\rho =0$ if and only if the following conditions hold:
\begin{enumerate}[(i)]
\item $\cH_B = \bigoplus_k \cH_{b^L_k} \ot \cH_{b^R_k}$,
\item $\rho_{ABC} = \bigoplus_k  p_k \rho_{Ab^L_k} \ot \rho_{b^R_kC}$, where $\rho_{Ab^L_k}\in\density{\cH_A \ot \cH_{b^L_k}}, \rho_{b^R_kC} \in \density{\cH_{b^R_k} \ot\cH_C}$
for each index $k$; and $\set{p_k}$ is a probability distribution.
\end{enumerate}

In order to get rid of the above-known difficult computation such as
logarithm and complex exponential power of states, Zhang
\cite{Zhang2013} gave another new characterization of vanishing
conditional mutual information. Specifically, define
\begin{eqnarray*}
M &:=&
(\rho^{1/2}_{AB}\ot\I_C)(\I_A\ot\rho^{-1/2}_B\ot\I_C)(\I_A\ot\rho^{1/2}_{BC})\equiv
\rho^{1/2}_{AB}\rho^{-1/2}_B\rho^{1/2}_{BC}.
\end{eqnarray*}
Then the following conditions are equivalent:
\begin{enumerate}[(i)]
\item The vanishing of conditional mutual information, i.e. $I(A:C|B)_\rho = 0$.
\item $\rho_{ABC} = MM^\dagger = \rho^{1/2}_{AB}\rho^{-1/2}_B\rho_{BC}\rho^{-1/2}_B \rho^{1/2}_{AB}$.
\item $\rho_{ABC} = M^\dagger M = \rho^{1/2}_{BC}\rho^{-1/2}_B\rho_{AB}\rho^{-1/2}_B \rho^{1/2}_{BC}$.
\end{enumerate}
With these characterizations of vanishing conditional mutual
information, one starts to make an attempt to get a lower bound on
the QCMI. In \cite{Brandao2011}, Brand\~{a}o \emph{et al.} first
obtained the following lower bound for $I(A:C|B)_\rho$:
\begin{eqnarray}\label{eq:Fernando}
I(A:C|B)_\rho \geqslant \frac18
\min_{\sigma_{AC}\in\mathbb{SEP}}\norm{\rho_{AC} -
\sigma_{AC}}^2_{1-\mathbb{LOCC}},
\end{eqnarray}
where
$$
\norm{\rho_{AC} - \sigma_{AC}}^2_{1-\mathbb{LOCC}} \defeq
\sup_{\cM\in1-\mathbb{LOCC}}\norm{\cM(\rho_{AC}) -
\cM(\sigma_{AC})}_1.
$$
Here $\mathbb{SEP}$ means the set of all separable states over the
bipartite cut $A:C$; 1-$\mathbb{LOCC}$ means the one-way LOCC
measurement. They used many advanced mathematical techniques to get
this result in their paper. Based on this result, they cracked a
\emph{long-standing} open problem in quantum information theory: the
squashed entanglement is \emph{faithful}. Later, Li and Winter in
\cite{Li2014} gave another approach to study the same problem and
have improved the lower bound for $I(A:C|B)_\rho$:
\begin{eqnarray}\label{eq:keli}
I(A:C|B)_\rho \geqslant \frac12
\min_{\sigma_{AC}\in\mathbb{SEP}}\norm{\rho_{AC} -
\sigma_{AC}}^2_{1-\mathbb{LOCC}}.
\end{eqnarray}

A different approach is taken by Ibinson \emph{et al.} in
\cite{Ibinson2008}. They studied the robustness of \emph{quantum
Markov chains}, i.e. the perturbation to the states with vanishing
conditional mutual information. They found that the quantum Markov
chains are not robust because, even if the conditional mutual
information is small, the original tripartite state can deviate a
lot from Markov chains.

Several breakthroughs about the investigation of bounding the small
conditional mutual information are made, respectively, by Fawzi and
Renner \cite{FR}, Wilde \emph{et. al} \cite{BLW,BSW,DW,SBW},
Brand\~{a}o \cite{BHOS}, Li and Winter \cite{Li2014,LW}, and, Zhang
and Wu \cite{Lin}. In this paper, we give a unifying treatment for
some entropy inequalities and improvement of monotonicity inequality
of relative entropy under unital quantum channels by employing
quantum $\alpha$-R\'{e}nyi relative entropy \cite{Mosonyi}. Once we
get one of our main results
(Theorem~\ref{th:stronger-monotonicity}), we can simply derive all
improved versions of some quantum entropy inequalities
\cite{Carlen}. Note that our method is different from that in
\cite{Carlen}, and is much simpler compared to the one in
\cite{Lin}.

The paper is organized as follows. The definition and properties of
quantum R\'{e}nyi relative entropy are given in
Section~\ref{sect:Renyi}. Section~\ref{sect:Main-result} deals with
the main results and their consequences. In the
Section~\ref{sect:entropy-inequalities}, we summarize a series of
strengthened entropy inequalities. The discussion and concluding
remarks are presented in Section~\ref{sect:Remarks}. Some questions
are left open for the future research.

\section{Quantum $\alpha$-R\'{e}nyi relative entropy}\label{sect:Renyi}

The \emph{quantum $\alpha$-R\'{e}nyi relative entropy} is defined as
follows \cite{Mosonyi}:
\begin{eqnarray}
\rS_\alpha(\rho||\sigma):=
\frac1{\alpha-1}\log\Tr{\rho^\alpha\sigma^{1-\alpha}},
\end{eqnarray}
where $\rho,\sigma\in\density{\cH_d}$, and a parameter
$\alpha\in(0,1)$. Two important properties of $\alpha$-R\'{e}nyi
relative entropy used in this paper are listed below: it holds that
\begin{enumerate}[(i)]
\item $\rS(\rho||\sigma) =
\lim_{\alpha\to1^-}\rS_\alpha(\rho||\sigma)$, thus we denote
$\rS(\rho||\sigma) = \rS_{1^-}(\rho||\sigma)$;
\item $\alpha\mapsto \rS_\alpha(\rho||\sigma)$ is monotonically
increasing on (0,1).
\end{enumerate}
Hence, if $\alpha\geqslant\frac12$, then
$\rS_{1/2}(\rho||\sigma)\leqslant \rS_\alpha(\rho||\sigma)$. Taking
the limit for $\alpha\to1^-$ on the right hand side of the
inequality, we have the following important result \footnote{This
method to get the inequality is pointed out to the author by M.
Wilde.}:
\begin{eqnarray}\label{eq:renyi-divergence}
\rS(\rho||\sigma) \geqslant -2\log\Tr{\sqrt{\rho}\sqrt{\sigma}}
\end{eqnarray}
for two states $\rho$ and $\sigma$. The same inequality is obtained
by Carlen and Lieb \cite{Carlen} using Peierls-Bogoliubov inequality
and Golden-Thompson inequality. Later, in our further
investigations, we find that this inequality seems to improve some
of the entropy inequalities obtained recently. Compared with
Pinsker's bound for relative entropy
\begin{eqnarray}
\rS(\rho||\sigma)\geqslant\frac12\norm{\rho-\sigma}^2_1,
\end{eqnarray}
the lower bound in Eq.~\eqref{eq:renyi-divergence} for the relative
entropy is very useful in our present paper. Note that there is an
identity which will be used in our treatment:
$$
\rS(\rho||\mu\sigma) = \rS(\rho||\sigma) - \log\mu,~~\forall \mu>0.
$$

\section{Main results}\label{sect:Main-result}

In this section, we prove our main theorem. We take a unifying
method to treat improvement of some entropy inequalities. The proof
strategy followed here is completely different from that used by
Carlen and Lieb \cite{Carlen}. In the following proposition, we
present a very simple and natural proof following our unifying
approach.
\begin{prop}\label{prop:stronger-monotonicity}
For two states $\rho,\sigma\in\density{\cH_d}$ and a quantum channel
$\Phi$ over $\cH_d$, we have
\begin{eqnarray*}
&&\rS(\rho||\sigma) - \rS(\Phi(\rho)||\Phi(\sigma)) \\
&&\geqslant -2\log\Tr{\sqrt{\rho}\sqrt{\exp\Br{\log\sigma +
\Phi^*(\log\Phi(\rho)) - \Phi^*(\log\Phi(\sigma))}}}.
\end{eqnarray*}
\end{prop}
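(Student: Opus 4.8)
The plan is to recognize the right-hand side as a direct instance of the R\'enyi lower bound~\eqref{eq:renyi-divergence} applied to $\rho$ against a carefully chosen auxiliary operator. Define
\[
\tilde{\sigma} := \exp\Br{\log\sigma + \Phi^*(\log\Phi(\rho)) - \Phi^*(\log\Phi(\sigma))}.
\]
Being the exponential of a Hermitian operator, $\tilde{\sigma}$ is positive definite, hence full-rank, so $\supp(\rho)\subseteq\supp(\tilde{\sigma})$ holds automatically and $\rS(\rho||\tilde{\sigma})$ is finite. The entire argument rests on the claim that $\tilde{\sigma}$ is engineered precisely so that
\[
\rS(\rho||\tilde{\sigma}) = \rS(\rho||\sigma) - \rS(\Phi(\rho)||\Phi(\sigma)),
\]
which is exactly the left-hand side of the proposition.

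To verify this identity, I would first expand $\rS(\rho||\tilde{\sigma}) = \Tr{\rho\log\rho} - \Tr{\rho\log\tilde{\sigma}}$ and substitute $\log\tilde{\sigma} = \log\sigma + \Phi^*(\log\Phi(\rho)) - \Phi^*(\log\Phi(\sigma))$ (the logarithm undoes the exponential since the exponent is Hermitian). The two terms carrying $\Phi^*$ are then handled by the defining adjoint relation $\Tr{\Phi^*(X)Y} = \Tr{X\Phi(Y)}$ together with cyclicity of the trace: taking $X=\log\Phi(\rho)$, $Y=\rho$ gives $\Tr{\rho\,\Phi^*(\log\Phi(\rho))} = \Tr{\Phi(\rho)\log\Phi(\rho)}$, and likewise $\Tr{\rho\,\Phi^*(\log\Phi(\sigma))} = \Tr{\Phi(\rho)\log\Phi(\sigma)}$. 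Collecting the four resulting traces regroups them into $\br{\Tr{\rho\log\rho} - \Tr{\rho\log\sigma}} - \br{\Tr{\Phi(\rho)\log\Phi(\rho)} - \Tr{\Phi(\rho)\log\Phi(\sigma)}}$, which is precisely the claimed difference of relative entropies.

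With the identity in hand, I would apply the R\'enyi bound~\eqref{eq:renyi-divergence}. The one subtlety to flag---and the only genuine obstacle---is that $\tilde{\sigma}$ need not have unit trace, whereas~\eqref{eq:renyi-divergence} is stated for density operators. This is resolved by the scaling identity $\rS(\rho||\mu\sigma) = \rS(\rho||\sigma) - \log\mu$ recorded in Section~\ref{sect:Renyi}. Writing $\mu = \Tr{\tilde{\sigma}}$ and $\hat{\sigma} = \tilde{\sigma}/\mu$, the bound gives $\rS(\rho||\hat{\sigma}) \geqslant -2\log\Tr{\sqrt{\rho}\sqrt{\hat{\sigma}}}$; meanwhile $\rS(\rho||\tilde{\sigma}) = \rS(\rho||\hat{\sigma}) - \log\mu$ and $-2\log\Tr{\sqrt{\rho}\sqrt{\tilde{\sigma}}} = -2\log\Tr{\sqrt{\rho}\sqrt{\hat{\sigma}}} - \log\mu$, so the two $\log\mu$ contributions cancel and the inequality passes intact to the unnormalized $\tilde{\sigma}$. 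Combining this with the identity of the previous paragraph yields the proposition directly, and I would expect the verification of the regrouping of traces to be entirely routine once the adjoint relation is invoked.
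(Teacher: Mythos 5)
Your proof is correct and follows essentially the same route as the paper's: both recognize the left-hand side as $\rS(\rho||\exp[\log\sigma + \Phi^*(\log\Phi(\rho)) - \Phi^*(\log\Phi(\sigma))])$ via the adjoint relation $\Tr{\Phi^*(X)Y}=\Tr{X\Phi(Y)}$, and then apply the R\'enyi bound \eqref{eq:renyi-divergence} to the normalized version of this operator, with the normalization constant $\log\mu$ (the paper's $\log\lambda$) cancelling on both sides. The only difference is cosmetic---the paper normalizes at the outset while you normalize at the end.
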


\begin{proof}
Define a state as follows:
$$
\omega = \lambda^{-1}\exp(\log\sigma +\Phi^*(\log\Phi(\rho) -
\log\Phi(\sigma))),
$$
with $\lambda:=\Tr{\exp(\log\sigma +\Phi^*(\log\Phi(\rho) -
\log\Phi(\sigma)))}>0$. Multiplying by $\lambda$ on both sides of
above equation and then taking logarithm, we get
\begin{eqnarray}\label{Eq:one}
\log\sigma +\Phi^*(\log\Phi(\rho) - \log\Phi(\sigma)) =
\log(\lambda\omega).
\end{eqnarray}
Now consider
\begin{eqnarray*}
&&\rS(\rho||\sigma) - \rS(\Phi(\rho)||\Phi(\sigma)) \\
&&= \Tr{\rho(\log\rho - \log\sigma)} - \Tr{\Phi(\rho)(\log\Phi(\rho)
-
\log\Phi(\sigma))}\\
&&=\Tr{\rho(\log\rho - \log\sigma)} - \Tr{\rho\Phi^*(\log\Phi(\rho)
- \log\Phi(\sigma))}\\
&&= \Tr{\rho\Pa{\log\rho - [\log\sigma +\Phi^*(\log\Phi(\rho) -
\log\Phi(\sigma))]}}.
\end{eqnarray*}
Using Eq. \eqref{Eq:one}, we have
\begin{align*}
\rS(\rho||\sigma) - \rS(\Phi(\rho)||\Phi(\sigma)) &=\rS(\rho||\lambda\omega)\\
&= \rS(\rho||\omega)-\log\lambda\\
&\geqslant -2\log\Tr{\sqrt{\rho}\sqrt{\omega}} - \log\lambda.
\end{align*}
Since
$$
\sqrt{\omega} = \lambda^{-1/2}\sqrt{\exp(\log\sigma
+\Phi^*(\log\Phi(\rho) - \log\Phi(\sigma)))},
$$
it follows that
$$
-2\log\Tr{\sqrt{\rho}\sqrt{\omega}} - \log\lambda =
-2\log\Tr{\sqrt{\rho}\sqrt{\exp(\log\sigma +\Phi^*(\log\Phi(\rho) -
\log\Phi(\sigma)))}}.
$$
This completes the proof.
\end{proof}

\begin{remark}\label{rem:1}
For any given two positive semi-definite matrices $M$ and $N$, it
holds \cite{Lin} that
\begin{eqnarray}\label{eq:matrix-inequality}
\norm{\sqrt{M} - \sqrt{N}}^2_2 \leqslant\norm{M-N}_1\leqslant
\norm{\sqrt{M} - \sqrt{N}}_2 \norm{\sqrt{M} + \sqrt{N}}_2,
\end{eqnarray}
where $\norm{X}_p:= \Pa{\Tr{\abs{X}^p}}^{1/p}$ is Schatten $p$-norm
for positive integers $p$, and $\abs{X}=\sqrt{X^\dagger X}$. Indeed,
the proof of Eq.~\eqref{eq:matrix-inequality} uses the well-known
inequality in matrix analysis, i.e. \emph{Audenaert's inequality}
\cite{Audenaert}:
\begin{eqnarray}
\Tr{M^tN^{1-t}}\geqslant\frac12\Tr{M+N - \abs{M-N}}
\end{eqnarray}
for all $t\in[0,1]$ and positive matrices $M,N$. If both the traces
of $M$ and $N$ are no more than one, i.e. $\Tr{M},
\Tr{N}\leqslant1$, then we see from the proof of \cite[Theorem
2.1]{Lin} that
$$
\Tr{\sqrt{M}\sqrt{N}} \leqslant 1 - \frac12\norm{\sqrt{M} -
\sqrt{N}}^2_2.
$$
Furthermore,
\begin{eqnarray}
-2\log\Tr{\sqrt{M}\sqrt{N}} \geqslant -2\log\Pa{1 -
\frac12\norm{\sqrt{M} - \sqrt{N}}^2_2} \geqslant \norm{\sqrt{M} -
\sqrt{N}}^2_2,
\end{eqnarray}
where we used the fact that $-\log(1-t)\geqslant t$ for $t\leqslant
1$.
\end{remark}

\begin{prop}[Lieb, \cite{Lieb}]\label{th:Lieb-concavity}
For a fixed Hermitian matrix $H\in M_d(\complex)$, the following map
\begin{eqnarray}
X\mapsto \Tr{e^{H+\log X}}
\end{eqnarray}
is concave over the set $\pd{\complex^d}$ of all positive definite
matrices of order $d$.
\end{prop}

\begin{prop}\label{prop:mono-under-ptrace}
For two arbitrary bipartite states
$\rho_{AB},\sigma_{AB}\in\density{\cH_{AB}}$ with
$\cH_{AB}=\cH_A\ot\cH_B$, it holds that
\begin{eqnarray}
\rS(\rho_{AB}||\sigma_{AB}) - \rS(\rho_A||\sigma_A)
&\geqslant&-2\log\Tr{\sqrt{\rho_{AB}} \sqrt{\exp(\log\sigma_{AB} -
\log\sigma_A + \log\rho_A)}}\\
&\geqslant& \norm{\sqrt{\rho_{AB}} - \sqrt{\exp(\log\sigma_{AB} -
\log\sigma_A + \log\rho_A)}}^2_2.
\end{eqnarray}
In particular, $\rS(\rho_{AB}||\sigma_{AB}) = \rS(\rho_A||\sigma_A)$
if and only if $\log\rho_{AB} - \log\rho_A = \log\sigma_{AB} -
\log\sigma_A$.
\end{prop}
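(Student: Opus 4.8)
The plan is to obtain both inequalities as instances of Proposition~\ref{prop:stronger-monotonicity} applied to the partial-trace channel, and then to read off the equality condition from the resulting chain of inequalities.

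First I would specialize Proposition~\ref{prop:stronger-monotonicity} to the channel $\Phi = \ptr{B}{\cdot}$ on $\cH_{AB} = \cH_A \ot \cH_B$, for which $\Phi(\rho_{AB}) = \rho_A$ and $\Phi(\sigma_{AB}) = \sigma_A$. The dual of the partial trace with respect to the Hilbert--Schmidt inner product is the ampliation $\Phi^*(X) = X \ot \I_B$, since $\Tr{(X\ot\I_B)Z} = \Tr{X\ptr{B}{Z}}$ for every $Z$. Consequently $\Phi^*(\log\Phi(\rho_{AB})) = \log\rho_A$ and $\Phi^*(\log\Phi(\sigma_{AB})) = \log\sigma_A$ in the tensor-suppressed convention, so the operator inside the exponential in Proposition~\ref{prop:stronger-monotonicity} becomes exactly $\log\sigma_{AB} - \log\sigma_A + \log\rho_A$. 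This gives the first inequality at once. If $\supp(\rho_{AB}) \not\subseteq \supp(\sigma_{AB})$ the left-hand side is $+\infty$ and there is nothing to prove, so I assume compatible supports throughout.

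For the second inequality I would write $\tau := \exp(\log\sigma_{AB} - \log\sigma_A + \log\rho_A) \geqslant 0$ and $F := \Tr{\sqrt{\rho_{AB}}\sqrt{\tau}}$. Expanding the Hilbert--Schmidt norm gives $\norm{\sqrt{\rho_{AB}} - \sqrt{\tau}}_2^2 = 1 + \Tr{\tau} - 2F$, while the elementary scalar inequality $-\log F \geqslant 1 - F$ gives $-2\log F \geqslant 2 - 2F$. Comparing the two, the target bound $-2\log F \geqslant \norm{\sqrt{\rho_{AB}} - \sqrt{\tau}}_2^2$ reduces to the single operator inequality $\Tr{\tau} \leqslant 1$, i.e.
\[
\Tr{\exp(\log\sigma_{AB} - \log\sigma_A + \log\rho_A)} \leqslant 1.
\]
This trace bound is the heart of the matter and the step I expect to be the main obstacle; everything else is bookkeeping. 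To prove it I would invoke Lieb's concavity theorem in the form that $X \mapsto g(X) := \Tr{\exp(L + \log X)}$ is concave on positive operators, with $L := \log\sigma_{AB} - \log\sigma_A$, so that $g(\rho_A) = \Tr{\tau}$. The key observation is that $\sigma_A$ is a critical point of $g$ over the density operators: differentiating via $\frac{d}{dt}\Tr{e^{A(t)}} = \Tr{e^{A(t)}A'(t)}$ together with the identity $\Tr{\sigma_A\, D\log(\sigma_A)[Y]} = \Tr{Y}$ shows that the directional derivative of $g$ at $\sigma_A$ equals $\Tr{Y}$, which vanishes on all trace-preserving directions. Since $g$ is concave, this critical point is a global maximizer over states, and $g(\sigma_A) = \Tr{\exp(\log\sigma_{AB})} = \Tr{\sigma_{AB}} = 1$; hence $\Tr{\tau} = g(\rho_A) \leqslant g(\sigma_A) = 1$ (degenerate-support cases being handled by restricting to $\supp(\sigma_A)$ and continuity). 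This is the content I would isolate as the appendix lemma.

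Finally, for the equality characterization I would read the forward direction off the chain $\rS(\rho_{AB}||\sigma_{AB}) - \rS(\rho_A||\sigma_A) \geqslant -2\log F \geqslant \norm{\sqrt{\rho_{AB}} - \sqrt{\tau}}_2^2 \geqslant 0$: if the left-hand side vanishes, then so does $\norm{\sqrt{\rho_{AB}} - \sqrt{\tau}}_2^2$, giving $\sqrt{\rho_{AB}} = \sqrt{\tau}$, hence $\rho_{AB} = \tau$ and therefore $\log\rho_{AB} - \log\rho_A = \log\sigma_{AB} - \log\sigma_A$. For the converse I would substitute this operator identity directly into $\rS(\rho_{AB}||\sigma_{AB}) - \rS(\rho_A||\sigma_A) = \Tr{\rho_{AB}(\log\rho_{AB} - \log\sigma_{AB})} - \Tr{\rho_A(\log\rho_A - \log\sigma_A)}$; since $\log\rho_A - \log\sigma_A$ acts on $A$ alone, tracing out $B$ in the first term shows the two terms coincide and the difference is $0$.
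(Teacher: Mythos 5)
Your proof is correct, and it reaches the crucial trace bound by a genuinely different route than the paper. The skeleton is shared: like the paper, you obtain the first inequality by specializing Proposition~\ref{prop:stronger-monotonicity} to $\Phi=\trace_B$ (with $\Phi^*(X)=X\ot\I_B$), and you reduce the second inequality to the bound $\Tr{\exp(\log\sigma_{AB}-\log\sigma_A+\log\rho_A)}\leqslant 1$; your expansion of the $2$-norm together with $-\log F\geqslant 1-F$ is precisely the content of Remark~\ref{rem:1}, which the paper invokes at this step. The difference lies in how that trace bound is established. The paper takes $H=\log\rho_A-\log\sigma_A$ and applies Lieb's concavity (Theorem~\ref{th:Lieb-concavity}) in the $\sigma_{AB}$ argument: since the trace is invariant under $\sigma_{AB}\mapsto(\I_A\ot U_B)\sigma_{AB}(\I_A\ot U_B)^\dagger$, Haar-averaging and Jensen's inequality allow $\sigma_{AB}$ to be replaced by its twirl $\sigma_A\ot\I_B/d_B$ (Proposition~\ref{cor:randomized}), where the trace evaluates exactly to $1$. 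You instead fix $L=\log\sigma_{AB}-\log\sigma_A$ and apply Lieb's concavity in the \emph{other} argument, to $g(X)=\Tr{\exp(L+\log X)}$ over states $X$ on $\cH_A$: the first-order computation (using $e^{L+\log\sigma_A}=\sigma_{AB}$, a partial trace over $B$, and the identity $\Tr{Z\,D\log(Z)[Y]}=\Tr{Y}$) shows $\sigma_A$ is a critical point, hence by concavity a global maximizer over states, with $g(\sigma_A)=\Tr{\sigma_{AB}}=1$. Your variational argument dispenses with the Haar integration and the twirling identity, is self-contained, and pinpoints $\sigma_A$ as the maximizer; the paper's twirling argument avoids derivative computations and transfers verbatim to the asymmetric three-state generalization recorded in Remark~\ref{remark-1}, which is less immediate from the critical-point viewpoint (though it can be adapted). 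Your handling of the equality condition and of degenerate supports matches the intended argument and is in fact spelled out more explicitly than in the paper's own proof, which leaves those steps implicit.
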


\begin{proof}
In Proposition~\ref{prop:stronger-monotonicity}, letting
$\rho=\rho_{AB}$, $\sigma=\sigma_{AB}$, and the quantum channel
$\Phi=\trace_B$ (a partial trace over system $B$), we obtain the
first inequality. The second inequality follows from
Remark~\ref{rem:1} due to the fact that $\Tr{\exp(\log\sigma_{AB} -
\log\sigma_A + \log\rho_A)}\leqslant1$. Indeed, let $H=\log\rho_A -
\log\sigma_A$ and $X=\sigma_{AB}$ in
Proposition~\ref{th:Lieb-concavity}, it follows that
\begin{eqnarray*}
&&\Tr{\exp(\log\rho_A -
\log\sigma_A+\log\sigma_{AB})}\\
&&=\int_{\unitary{d_B}}\Tr{\exp\Pa{\log\rho_A\ot \I_B -
\log\sigma_A\ot \I_B + \log
(U_B\sigma_{AB}U^\dagger_B)}}dU_B\\
&&\leqslant \Tr{\exp\Pa{\log\rho_A\ot \I_B - \log\sigma_A\ot\I_B +
\log
\Br{\int_{\unitary{d_B}}U_B\sigma_{AB}U^\dagger_B\dif U_B}}}\\
&&=\Tr{\exp\Pa{\log\rho_A\ot \I_B -
\log\sigma_A\ot\I_B + \log \Br{\sigma_A\ot \I_B/d_B}}}\\
&&=\Tr{\exp\Pa{\log\rho_A\ot \I_B - \log\sigma_A\ot\I_B +
\log\sigma_A\ot\I_B +
\I_A\ot\log(\I_B/d_B)}}\\
&&=\Tr{\rho_A\ot\I_B}/d_B =1.
\end{eqnarray*}
This concludes the proof.
\end{proof}

The above inequality is firstly derived by Carlen and Lieb as one of
their main results in \cite{Carlen}. The following result is a
direct consequence of it. In addition, we present here another
approach to get it.
\begin{cor}\label{th:newbound}
For an arbitrary
tripartite state $\rho_{ABC}$, we have that
\begin{eqnarray}
I(A:C|B)_\rho&\geqslant& -2\log\Tr{\sqrt{\rho_{ABC}}
\sqrt{\exp(\log\rho_{AB} -
\log\rho_B+\log\rho_{BC})}}\\
&\geqslant& \norm{\sqrt{\rho_{ABC}} - \sqrt{\exp(\log\rho_{AB} -
\log\rho_B+\log\rho_{BC})}}^2_2.\label{eq:lower-bound:2-norm}
\end{eqnarray}
In particular, the conditional mutual information vanishes if and
only if $\log\rho_{ABC} + \log\rho_B = \log\rho_{AB} +
\log\rho_{BC}$.
\end{cor}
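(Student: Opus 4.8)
The plan is to realize the conditional mutual information $I(A:C|B)_\rho$ as \emph{exactly} the monotonicity gap controlled by Proposition~\ref{prop:mono-under-ptrace}, and then read off both inequalities and the equality condition at once. To do this I would regroup the tripartite space $\cH_{ABC}$ by treating $\cH_{BC}$ as the retained factor and $\cH_A$ as the factor to be discarded, so that the partial trace over $A$ plays the role of $\trace_B$ in the proposition. Concretely, I would invoke Proposition~\ref{prop:mono-under-ptrace} with $\rho_{ABC}$ placed in the slot of $\rho_{AB}$ and with the reference state
$$
\sigma_{ABC} := \rho_{AB}\ot\frac{\I_C}{d_C}
$$
placed in the slot of $\sigma_{AB}$; tracing out $A$ then sends $\rho_{ABC}\mapsto\rho_{BC}$ and $\sigma_{ABC}\mapsto\rho_B\ot\frac{\I_C}{d_C}$.

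With this substitution the exponent appearing on the right-hand side of Proposition~\ref{prop:mono-under-ptrace} becomes
$$
\log\sigma_{ABC}-\log\Pa{\rho_B\ot\tfrac{\I_C}{d_C}}+\log\rho_{BC}=\log\rho_{AB}-\log\rho_B+\log\rho_{BC},
$$
because the two copies of $\log(\I_C/d_C)=-(\log d_C)\I_C$ cancel. This already matches the argument of the exponential in the statement, so both the first and the second inequality of the corollary will follow verbatim from the two inequalities of the proposition, provided the left-hand side is correctly identified.

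That identification is the one computation I would carry out with care. Expanding each term and using $\Tr{\rho_{ABC}\log\rho_{AB}}=\Tr{\rho_{AB}\log\rho_{AB}}=-\rS(\rho_{AB})$ (and likewise on the $BC$ side), I expect
$$
\rS\Pa{\rho_{ABC}\,\|\,\rho_{AB}\ot\tfrac{\I_C}{d_C}}=-\rS(\rho_{ABC})+\rS(\rho_{AB})+\log d_C,\qquad\rS\Pa{\rho_{BC}\,\|\,\rho_B\ot\tfrac{\I_C}{d_C}}=-\rS(\rho_{BC})+\rS(\rho_B)+\log d_C .
$$
The crucial point is that the two additive constants $\log d_C$ cancel on subtraction, leaving exactly $\rS(\rho_{AB})+\rS(\rho_{BC})-\rS(\rho_{ABC})-\rS(\rho_B)=I(A:C|B)_\rho$. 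Beyond this cancellation, the only thing to check is the support condition $\supp(\rho_{ABC})\subseteq\supp(\rho_{AB})\ot\cH_C=\supp(\sigma_{ABC})$, which guarantees all relative entropies are finite; everything else is inherited directly from the proposition.

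Finally, the equality clause is immediate from the equality clause of Proposition~\ref{prop:mono-under-ptrace}: vanishing of $I(A:C|B)_\rho$ is equivalent to $\log\rho_{ABC}-\log\rho_{BC}=\log\sigma_{ABC}-\log\Pa{\rho_B\ot\tfrac{\I_C}{d_C}}=\log\rho_{AB}-\log\rho_B$, which rearranges to $\log\rho_{ABC}+\log\rho_B=\log\rho_{AB}+\log\rho_{BC}$, recovering Ruskai's characterization. The main obstacle, such as it is, is purely organizational---choosing the reference state $\rho_{AB}\ot\frac{\I_C}{d_C}$ so that the exponent collapses correctly and the $\log d_C$ normalizations cancel---rather than any genuine analytic difficulty, since all the work is already packaged inside Proposition~\ref{prop:mono-under-ptrace}.
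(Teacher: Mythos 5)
Your proposal is correct, and it is a genuine (if close) variant of the paper's argument rather than a reproduction of it. The paper proves the first inequality either by rewriting $I(A:C|B)_\rho = \rS(\rho_{ABC}||\omega_{ABC}) - \log\lambda$ with $\lambda\omega_{ABC} = \exp(\log\rho_{AB}-\log\rho_B+\log\rho_{BC})$ and applying \eqref{eq:renyi-divergence}, or by invoking Proposition~\ref{prop:stronger-monotonicity} with $\Phi=\trace_A$ and reference state $\sigma=\rho_{AB}\ot\rho_C$; either way it then needs two further steps, namely Remark~\ref{rem:1} together with the trace bound \eqref{eq:less1} (reproved via Lieb's concavity theorem) to get the second inequality, and a separate argument from the vanishing of the $2$-norm to get the equality clause. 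You instead regroup the tripartite space so that $BC$ is retained and $A$ is traced out, and feed $\rho_{ABC}$ and $\sigma_{ABC}=\rho_{AB}\ot\I_C/d_C$ into Proposition~\ref{prop:mono-under-ptrace}; your bookkeeping checks out: the two copies of $\log(\I_C/d_C)$ cancel in the exponent, the two $\log d_C$ terms cancel in the difference of relative entropies, leaving exactly $I(A:C|B)_\rho$, and the support condition $\supp(\rho_{ABC})\subseteq\supp(\rho_{AB})\ot\cH_C$ is the standard one. What your route buys is economy: both inequalities \emph{and} the iff clause drop out of a single substitution, because the trace bound and the equality condition are already packaged inside Proposition~\ref{prop:mono-under-ptrace} (whose proof uses the same Lieb-concavity computation the paper repeats inside the corollary's proof), and there is no circularity since that proposition is established independently of the corollary. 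One small observation: nothing is special about $\I_C/d_C$; any state $\tau_C$ on $C$ (in particular the paper's choice $\rho_C$) gives the same collapsed exponent $\log\rho_{AB}-\log\rho_B+\log\rho_{BC}$ and the same cancellation of the $C$-dependent additive constants, so the paper's reduction and yours differ only in which product reference state is plugged into which of the two (nested) propositions.
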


\begin{proof}
Note that the quantum conditional mutual information $I(A:C|B)_\rho$
can be rewritten as follows:
$$
I(A:C|B)_\rho = \rS(\rho_{ABC}||\omega_{ABC}) - \log\lambda,
$$
where $\lambda\omega_{ABC} = \exp(\log\rho_{AB} -
\log\rho_B+\log\rho_{BC})$ and $\lambda = \Tr{\exp(\log\rho_{AB} -
\log\rho_B+\log\rho_{BC})}$. By using \eqref{eq:renyi-divergence},
we get
\begin{eqnarray*}
I(A:C|B)_\rho &\geqslant&
-2\log\Tr{\sqrt{\rho_{ABC}}\sqrt{\omega_{ABC}}}- \log\lambda\\
&=& -2\log\Tr{\sqrt{\rho_{ABC}} \sqrt{\exp(\log\rho_{AB} -
\log\rho_B+\log\rho_{BC})}}.
\end{eqnarray*}
This is the first inequality. The second approach to the first
inequality is by using Theorem~\ref{th:stronger-monotonicity}. By
letting $\Phi = \trace_A$, $\rho=\rho_{ABC}$ and $\sigma =
\rho_{AB}\ot\rho_C$, it follows that
$$
I(A:C|B)_\rho = \rS(\rho||\sigma) - \rS(\Phi(\rho)||\Phi(\sigma)).
$$
By employing Proposition~\ref{prop:stronger-monotonicity}, we have
$$
I(A:C|B)_\rho \geqslant
-2\log\Tr{\sqrt{\rho_{ABC}}\sqrt{\exp(\log\rho_{AB} -
\log\rho_B+\log\rho_{BC})}}.
$$
The second inequality follows directly from Remark~\ref{rem:1} due
to the fact \cite{Ruskai2002} that
\begin{eqnarray}\label{eq:less1}
\Tr{\exp(\log\rho_{AB} - \log\rho_B+\log\rho_{BC})}\leqslant1.
\end{eqnarray}
Here we give another proof of this inequality. Indeed, let
$H=\log\rho_{AB} - \log\rho_B$ and $X=\rho_{BC}$ in
Proposition~\ref{th:Lieb-concavity}, it follows that
\begin{eqnarray*}
&&\Tr{\exp(\log\rho_{AB} -
\log\rho_B+\log\rho_{BC})}\\
&&=\int_{\unitary{d_C}}\Tr{\exp\Pa{\log\rho_{AB}\ot \I_C -
\I_A\ot\log\rho_B\ot\I_C+\I_A\ot\log
(U_C\rho_{BC}U^\dagger_C)}}\dif U_C\\
&&\leqslant \Tr{\exp\Pa{\log\rho_{AB}\ot \I_C -
\I_A\ot\log\rho_B\ot\I_C+\I_A\ot\log
\Br{\int_{\unitary{d_C}}U_C\rho_{BC}U^\dagger_C\dif U_C}}}\\
&&=\Tr{\exp\Pa{\log\rho_{AB}\ot \I_C -
\I_A\ot\log\rho_B\ot\I_C+\I_A\ot\log \Br{\rho_B\ot \I_C/d_C}}}\\
&&=\Tr{\exp\Pa{\log\rho_{AB}\ot \I_C -
\I_A\ot\log\rho_B\ot\I_C+\I_A\ot\log\rho_B\ot\I_C +
\I_A\ot\I_B\ot\log(\I_C/d_C)}}\\
&&=\Tr{\rho_{AB}\ot\I_C}/d_C =1.
\end{eqnarray*}
Now if the conditional mutual information vanishes, then
$$
\norm{\sqrt{\rho_{ABC}} - \sqrt{\exp\Pa{\log\rho_{AB} +
\log\rho_{BC} -\log\rho_B}}}_2=0,
$$
that is, $\sqrt{\rho_{ABC}} = \sqrt{\exp\Pa{\log\rho_{AB} +
\log\rho_{BC} -\log\rho_B}}$, which is equivalent to the following:
$$
\rho_{ABC} = \exp\Pa{\log\rho_{AB} + \log\rho_{BC} -\log\rho_B}.
$$
By taking logarithm over both sides, it is seen that $\log\rho_{ABC}
= \log\rho_{AB} + \log\rho_{BC} -\log\rho_B$, a well-known equality
condition of strong subadditivity obtained by Ruskai in
\cite{Ruskai2002}. This completes the proof.
\end{proof}

\begin{remark}\label{remark-1}
Note that our technique used in the proof of the inequality
\eqref{eq:less1} implies a more general result: if
$\rho_{ABC},\sigma_{ABC},\tau_{ABC}$ are tripartite states on
$\cH_{ABC}$ satisfying the condition that $\rho_B = \sigma_B$ or
$\sigma_B=\tau_B$, then
\begin{eqnarray}
\Tr{\exp(\log\rho_{AB} - \log\sigma_B+\log\tau_{BC})}\leqslant1.
\end{eqnarray}
\end{remark}

Now we will give an important lemma. In fact, it is based on a
famous Lieb's concavity result, i.e.
Proposition~\ref{th:Lieb-concavity}.
\begin{lem}\label{lem:A1}
For given two states $\rho,\sigma\in\density{\cH_d}$ and a unital
quantum channel $\Phi$ defined over $\cH_d$, i.e. $\Phi(\I_d)=\I_d$
and $\Phi^*(\I_d)=\I_d$, we have
\begin{eqnarray}\label{eq:datta-wilde}
\Tr{\exp\Pa{\log\sigma + \Phi^*(\log\Phi(\rho)) -
\Phi^*(\log\Phi(\sigma))}}\leqslant 1.
\end{eqnarray}
\end{lem}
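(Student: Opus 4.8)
The plan is to first attempt the strategy that already worked in Corollary~\ref{th:newbound}: dilate $\Phi$ to a Stinespring isometry $V\colon\cH_d\to\cH_d\ot\cH_E$ with $\Phi(X)=\ptr{E}{VXV^\dagger}$ and $\Phi^*(Y)=V^\dagger(Y\ot\I_E)V$, then twirl over $\unitary{d_E}$ so as to feed the result into Lieb's concavity theorem (Theorem~\ref{th:Lieb-concavity}) in the form ``$X\mapsto\Tr{\exp(H+\log X)}$ is concave.'' I expect this to be exactly where the difficulty lies, and this is the main obstacle: unlike the term $\log\rho_{BC}$ in the Corollary, the term $\Phi^*(\log\Phi(\rho))=V^\dagger(\log\Phi(\rho)\ot\I_E)V$ is a \emph{compression} of a logarithm rather than a logarithm itself, so it does not fit the form $\log X$ that Lieb's inequality needs; moreover a general unital channel is not a mixture of unitaries, so $\Phi$ admits no twirl representation to average over. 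This suggests abandoning the literal twirl and instead linearizing the exponential.

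The route I would actually carry out uses the Gibbs variational principle. Writing $L:=\log\sigma+\Phi^*(\log\Phi(\rho))-\Phi^*(\log\Phi(\sigma))$, which is a well-defined Hermitian operator (see below), the principle gives
\[
\log\Tr{\exp(L)}=\max_{\tau\in\density{\cH_d}}\Pa{\Tr{\tau L}+\rS(\tau)}.
\]
Hence it suffices to prove $\Tr{\tau L}+\rS(\tau)\leqslant 0$ for every state $\tau$. Using the defining adjoint relation $\Tr{\Phi^*(X)Y}=\Tr{X\Phi(Y)}$ to move each $\Phi^*$ onto $\tau$, every channel term becomes a trace against $\Phi(\tau)$, and a short rearrangement (adding and subtracting $\Tr{\Phi(\tau)\log\Phi(\tau)}$) collapses the bound into
\[
\Tr{\tau L}+\rS(\tau)=-\rS(\tau||\sigma)+\rS(\Phi(\tau)||\Phi(\sigma))-\rS(\Phi(\tau)||\Phi(\rho)).
\]

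The inequality then closes from two standard facts: monotonicity of relative entropy under $\Phi$ (the data-processing inequality $\rS(\rho||\sigma)\geqslant\rS(\Phi(\rho)||\Phi(\sigma))$ recalled in the Introduction), which gives $\rS(\Phi(\tau)||\Phi(\sigma))\leqslant\rS(\tau||\sigma)$, and nonnegativity of relative entropy (Klein's inequality), which gives $\rS(\Phi(\tau)||\Phi(\rho))\geqslant 0$. Adding these yields $\Tr{\tau L}+\rS(\tau)\leqslant 0$, hence $\Tr{\exp(L)}\leqslant 1$. I would note that this keeps the argument ``based on Lieb's concavity'' in the sense intended by the paper, since monotonicity of relative entropy is itself one of the classical consequences of Theorem~\ref{th:Lieb-concavity}.

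Finally I would dispatch the well-definedness issue, which is really the only place the hypothesis of unitality enters. If $\rho,\sigma$ are full rank, then $\Phi(\rho)\geqslant\lambda_{\min}(\rho)\,\Phi(\I_d)=\lambda_{\min}(\rho)\,\I_d>0$ (and likewise for $\sigma$), so unitality forces $\Phi(\rho)$ and $\Phi(\sigma)$ to be full rank; thus all four logarithms are finite and $L$ is a genuine bounded Hermitian operator. The general case follows by applying the full-rank estimate to $\rho_\varepsilon=(1-\varepsilon)\rho+\varepsilon\I_d/d$ and $\sigma_\varepsilon=(1-\varepsilon)\sigma+\varepsilon\I_d/d$ and letting $\varepsilon\to0$, using continuity of both sides.
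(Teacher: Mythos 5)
Your proof is correct, and it takes a genuinely different route from the paper's. One preliminary correction: your stated reason for abandoning the dilation-plus-twirl strategy is not where the paper's proof actually struggles. The paper does push that strategy through: it applies the partial-trace case (Proposition~\ref{prop:mono-under-ptrace}, i.e.\ Theorem~\ref{th:Lieb-concavity} plus a twirl) to the dilated states $U(\sigma\ot\I_B/d_B)U^\dagger$, and it handles your ``compression of a logarithm'' objection on the Hermitian side rather than the $\log X$ side, by twirling $U^\dagger\Br{(\log\Phi(\rho)-\log\Phi(\sigma))\ot\I_B}U$ over the ancilla and invoking convexity of $X\mapsto\Tr{\exp(X)}$ together with Proposition~\ref{cor:randomized} to replace that term by $\Phi^*(\log\Phi(\rho)-\log\Phi(\sigma))\ot\I_B$. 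The real restriction in the paper's argument is its opening assumption that a unital channel can be written as $\Phi(X)=\Ptr{B}{U(X\ot\I_B/d_B)U^\dagger}$ with a maximally mixed ancilla; channels of this form (``noisy operations'') are a strict subclass of unital channels, so the paper's proof as written does not cover the full statement. Your route avoids this entirely: the Gibbs variational principle reduces the claim to $\Tr{\tau L}+\rS(\tau)\leqslant0$ for every state $\tau$, and the exact identity
\begin{eqnarray*}
\Tr{\tau L}+\rS(\tau) \;=\; -\rS(\tau||\sigma)+\rS(\Phi(\tau)||\Phi(\sigma))-\rS(\Phi(\tau)||\Phi(\rho)),
\end{eqnarray*}
which I verified term by term using $\Tr{\Phi^*(X)Y}=\Tr{X\Phi(Y)}$, closes the argument via data processing and Klein's inequality. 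What this buys: no dilation, no twirl, and unitality is used only to guarantee that $\Phi(\rho),\Phi(\sigma)$ are full rank; consequently your argument proves the more general statement of Lemma~\ref{lem:DW} (Datta--Wilde) for any channel preserving positive definiteness, and it covers \emph{all} unital channels, which the paper's own proof does not. The trade-off is that monotonicity of relative entropy enters as a black box, so unlike the Lieb-concavity route your proof cannot be read as an independent derivation of monotonicity-type inequalities (the logical depth is nevertheless comparable, since data processing is itself a classical consequence of Theorem~\ref{th:Lieb-concavity}); there is no circularity, because this lemma feeds into Theorem~\ref{th:stronger-monotonicity} only to add a remainder term to an inequality that is established independently. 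The one soft spot is your final limiting step: for singular $\rho$ or $\sigma$ the trace-exponential is not defined without convention, so ``continuity of both sides'' should really be read as \emph{defining} the degenerate case through the full-rank approximation --- though this is a level of care the paper itself does not attempt.
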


\begin{proof}
By Stinespring's dilation theorem, a given quantum channel $\Phi$
can be realized as
$$\Phi(X)=\Ptr{B}{U(X\ot \widehat\I_B)U^\dagger}$$
for some unitary $U$ and completely mixed state
$\widehat\I_B:=\I_B/d_B$ in an auxiliary Hilbert space $\cH_B$. The
dual of $\Phi$ is given by $\Phi^*(Y) = \Ptr{B}{U^\dagger
(Y\ot\widehat\I_B)U}$. Denote
$\rho_{AB}:=U(\rho\ot\widehat\I_B)U^\dagger$ and
$\sigma_{AB}:=U(\sigma\ot\widehat\I_B)U^\dagger$. Then
$\rho_A=\ptr{B}{\rho_{AB}}=\Phi(\rho)$ and
$\sigma_A=\ptr{B}{\sigma_{AB}}=\Phi(\sigma)$. Thus by the techniques
in the proof of Proposition~\ref{prop:mono-under-ptrace},
\begin{eqnarray}
\Tr{\exp(\log\rho_A\ot\I_B - \log\sigma_A\ot\I_B +
\log\sigma_{AB})}\leqslant 1.
\end{eqnarray}
That is,
\begin{eqnarray*}
1&\geqslant&\Tr{\exp\Pa{\log\Phi(\rho)\ot\I_B -
\log\Phi(\sigma)\ot\I_B + \log\Br{U(\sigma\ot\widehat\I_B)U^\dagger}}}\\
&=&\Tr{\exp\Pa{U^\dagger\Br{\log\Phi(\rho)\ot\I_B -
\log\Phi(\sigma)\ot\I_B}U + \log(\sigma\ot\widehat\I_B)}}\\
&=&\int_{\unitary{d_B}}\Tr{\exp\Pa{U^\dagger\Br{\log\Phi(\rho)\ot\I_B
- \log\Phi(\sigma)\ot\I_B}U +
\log(\sigma\ot\widehat\I_B)}}\dif V_B\\
&=& \int_{\unitary{d_B}}\Tr{\exp\Pa{(\I\ot
V_B)U^\dagger\Br{\log\Phi(\rho)\ot\I_B -
\log\Phi(\sigma)\ot\I_B}U(\I\ot V_B)^\dagger +
\log(\sigma\ot\widehat\I_B)}}\dif V_B\\
&\geqslant&\Tr{\exp\Pa{\int_{\unitary{d_B}}(\I\ot
V_B)U^\dagger\Br{\log\Phi(\rho)\ot\I_B -
\log\Phi(\sigma)\ot\I_B}U(\I\ot V_B)^\dagger \dif V_B +
\log(\sigma\ot\widehat\I_B)}}\\
&=&\Tr{\exp\Pa{\Ptr{B}{U^\dagger\Br{(\log\Phi(\rho) -
\log\Phi(\sigma))\ot\I_B}U}\ot\widehat\I_B + \log(\sigma\ot\widehat\I_B)}}\\
&=&\Tr{\exp\Pa{\Ptr{B}{U^\dagger\Br{(\log\Phi(\rho) -
\log\Phi(\sigma))\ot\widehat\I_B}U}\ot\I_B +
\log(\sigma\ot\widehat\I_B)}},
\end{eqnarray*}
where we used the following facts: $X\mapsto \Tr{\exp(X)}$ is a
convex functional over $\pd{\complex^n}$ and \cite{LZ}:
\begin{eqnarray}
\int_{\unitary{d_B}} (\I\ot V_B) \Br{U(X\ot\widehat\I_B)U^\dagger}
(\I\ot V_B)^\dagger \dif V_B = \Phi(X)\ot \widehat\I_B.
\end{eqnarray}
This indicates that
\begin{eqnarray*}
1&\geqslant&\Tr{\exp\Pa{\Phi^*(\log\Phi(\rho) -
\log\Phi(\sigma))\ot\I_B + \log(\sigma\ot\widehat\I_B)}}\\
&=&\Tr{\exp\Pa{\Phi^*(\log\Phi(\rho) - \log\Phi(\sigma))\ot\I_B +
\log\sigma\ot\I_B + \I_A\ot\log(\widehat\I_B)}}\\
&=&\Tr{\exp\Pa{[\Phi^*(\log\Phi(\rho) -
\log\Phi(\sigma))+\log\sigma]\ot\I_B + \I_A\ot\log(\widehat\I_B)}}\\
&=&\Tr{\exp\Pa{[\Phi^*(\log\Phi(\rho) -
\log\Phi(\sigma))+\log\sigma]} \ot \exp(\log(\widehat\I_B))}\\
&=&\Tr{\exp\Pa{[\Phi^*(\log\Phi(\rho) -
\log\Phi(\sigma))+\log\sigma]}} \Tr{\widehat\I_B}\\
&=&\Tr{\exp\Pa{[\Phi^*(\log\Phi(\rho) -
\log\Phi(\sigma))+\log\sigma]}},
\end{eqnarray*}
where we used the fact that $\exp(X\ot\I+\I\ot Y) =
\exp(X)\ot\exp(Y)$ for positive definite operators $X$ and $Y$.
\end{proof}

Note that Eq.~\eqref{eq:datta-wilde} is also obtained very recently
by Datta and Wilde \cite{DW}. The approach used here is very simple
and completely different from the one used by them. Now, we may
present our main result---the strengthened monotonicity inequality
of relative entropy---which is described as follows:
\begin{thrm}\label{th:stronger-monotonicity}
For any states $\rho,\sigma\in\density{\cH_d}$, and $\Phi$ a unital
quantum channel over $\cH_d$, we have
\begin{eqnarray}
\rS(\rho||\sigma) - \rS(\Phi(\rho)||\Phi(\sigma))\geqslant
\norm{\sqrt{\rho} - \sqrt{\exp\Pa{\log\sigma +
\Phi^*(\log\Phi(\rho)) - \Phi^*(\log\Phi(\sigma))}}}^2_2.
\end{eqnarray}
\end{thrm}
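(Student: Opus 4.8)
The plan is to combine the bound already proved in Proposition~\ref{prop:stronger-monotonicity} with the trace estimate of the preceding Lemma through an elementary scalar inequality. Write $A := \exp\Pa{\log\sigma + \Phi^*(\log\Phi(\rho)) - \Phi^*(\log\Phi(\sigma))}$, a positive operator. Proposition~\ref{prop:stronger-monotonicity} holds for \emph{every} quantum channel and gives
$$
\rS(\rho||\sigma) - \rS(\Phi(\rho)||\Phi(\sigma)) \geqslant -2\log\Tr{\sqrt{\rho}\sqrt{A}},
$$
so the entire content of the theorem reduces to showing that this right-hand side dominates $\norm{\sqrt{\rho} - \sqrt{A}}_2^2$. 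First I would expand the squared Hilbert--Schmidt norm:
$$
\norm{\sqrt{\rho} - \sqrt{A}}_2^2 = \Tr{\rho} - 2\Tr{\sqrt{\rho}\sqrt{A}} + \Tr{A} = 1 + \Tr{A} - 2\Tr{\sqrt{\rho}\sqrt{A}},
$$
using $\Tr{\rho}=1$.

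Setting $t := \Tr{\sqrt{\rho}\sqrt{A}} > 0$, the desired inequality becomes $-2\log t \geqslant 1 + \Tr{A} - 2t$. I would prove this from the universal scalar bound $\log t \leqslant t - 1$, which yields $-2\log t \geqslant 2 - 2t$; it then suffices that $2 - 2t \geqslant 1 + \Tr{A} - 2t$, i.e.\ that $\Tr{A} \leqslant 1$. This conversion from a $-2\log$ bound into a squared $2$-norm bound is exactly the mechanism recorded in Remark~\ref{rem:1}, and it is the same step already used in Proposition~\ref{prop:mono-under-ptrace} and Corollary~\ref{th:newbound}.

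The one nontrivial ingredient, and the place where the hypothesis that $\Phi$ is \emph{unital} must be used, is precisely the trace estimate $\Tr{A} \leqslant 1$. This is the statement of the Lemma immediately preceding the theorem, whose proof dilates $\Phi$ via Stinespring and then invokes Lieb's concavity theorem (Theorem~\ref{th:Lieb-concavity}) together with the convexity of $X \mapsto \Tr{\exp(X)}$. In the write-up I would simply cite that Lemma rather than reprove it. I expect no genuine obstacle beyond this: once $\Tr{A}\leqslant 1$ is in hand, the remaining argument is the one-variable inequality above. The only point demanding care is that unitality is genuinely essential here --- for a general (non-unital) channel one retains only the weaker $-2\log$ bound of Proposition~\ref{prop:stronger-monotonicity}, since $\Tr{A}\leqslant 1$ may fail.
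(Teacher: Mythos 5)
Your proof is correct and is essentially the proof the paper intends: Proposition~\ref{prop:stronger-monotonicity} supplies the $-2\log\Tr{\sqrt{\rho}\sqrt{A}}$ bound, the preceding Lemma (the only place unitality is used) supplies $\Tr{A}\leqslant 1$, and Remark~\ref{rem:1} converts the logarithmic bound into the squared $2$-norm bound. Your explicit expansion $\norm{\sqrt{\rho}-\sqrt{A}}^2_2 = 1+\Tr{A}-2\Tr{\sqrt{\rho}\sqrt{A}}$ combined with $\log t\leqslant t-1$ is just an equivalent rewriting of that remark's argument, so there is no substantive difference.
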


\begin{proof}
The proof follows immediately from Lemma~\ref{lem:A1} and
Remark~\ref{rem:1}.
\end{proof}

\begin{cor}
With the above notations, we have the following inequalities:
\begin{enumerate}[(i)]
\item Strengthened monotonicity inequality of relative entropy under a unital quantum channel:
\begin{eqnarray} \rS(\rho||\sigma) -
\rS(\Phi(\rho)||\Phi(\sigma))\geqslant \frac14\norm{\rho -
\exp\Pa{\log\sigma + \Phi^*(\log\Phi(\rho)) -
\Phi^*(\log\Phi(\sigma))}}^2_1
\end{eqnarray}
\item Strengthened monotonicity inequality of relative entropy under partial trace:
\begin{eqnarray}
\rS(\rho_{AB}||\sigma_{AB}) - \rS(\rho_A||\sigma_A) \geqslant
\frac14\norm{\rho_{AB} - \exp(\log\sigma_{AB} - \log\sigma_A +
\log\rho_A)}^2_1.
\end{eqnarray}
\item Strengthened subadditivity inequality of quantum entropy:
\begin{eqnarray}\label{eq:conditional-lower-bound}
I(A:C|B)_\rho \geqslant\frac14 \norm{\rho_{ABC} -
\exp\Pa{\log\rho_{AB} + \log\rho_{BC} -\log\rho_B}}^2_1.
\end{eqnarray}
\end{enumerate}
\end{cor}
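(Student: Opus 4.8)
The plan is to reduce all three inequalities to one elementary estimate that upgrades a Hilbert--Schmidt bound on the difference of square roots to a trace-norm bound on the operators themselves, and then simply to invoke the $2$-norm lower bounds already in hand. The estimate I would isolate is: for positive semidefinite $A,B$ with $\Tr{A}=1$ and $\Tr{B}\leqslant1$,
$$
\norm{\sqrt{A}-\sqrt{B}}^2_2\geqslant\frac14\norm{A-B}^2_1 .
$$
Granting this, each part is immediate. For (i) I take $A=\rho$ and $B=\exp\Pa{\log\sigma+\Phi^*(\log\Phi(\rho))-\Phi^*(\log\Phi(\sigma))}$, whose trace is at most $1$ by the preceding Lemma, and combine with Theorem~\ref{th:stronger-monotonicity}; for (ii) I take $A=\rho_{AB}$ and $B=\exp(\log\sigma_{AB}-\log\sigma_A+\log\rho_A)$ and combine with Proposition~\ref{prop:mono-under-ptrace}; for (iii) I take $A=\rho_{ABC}$ and $B=\exp\Pa{\log\rho_{AB}+\log\rho_{BC}-\log\rho_B}$ and combine with Corollary~\ref{th:newbound}. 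In each case the hypothesis $\Tr{B}\leqslant1$ is exactly the trace bound already established there.

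To prove the key estimate I set $P=\sqrt{A}$ and $Q=\sqrt{B}$ and begin from the symmetrized factorization
$$
A-B=P^2-Q^2=\frac12\Br{(P-Q)(P+Q)+(P+Q)(P-Q)} .
$$
Since $P,Q$ are Hermitian, the two terms are adjoints of each other and hence share the same trace norm; applying the triangle inequality together with the H\"older bound $\norm{XY}_1\leqslant\norm{X}_2\norm{Y}_2$ then yields
$$
\norm{A-B}_1\leqslant\norm{P-Q}_2\,\norm{P+Q}_2 .
$$
What remains is to control the factor $\norm{P+Q}_2$, and this is where I expect the real content to sit.

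The crux is the bound $\norm{P+Q}_2\leqslant2$. I would expand
$$
\norm{P+Q}^2_2=\Tr{A}+\Tr{B}+2\Tr{\sqrt{A}\sqrt{B}}
$$
and estimate the cross term by Cauchy--Schwarz for the Hilbert--Schmidt inner product, $\Tr{\sqrt{A}\sqrt{B}}\leqslant\sqrt{\Tr{A}}\sqrt{\Tr{B}}\leqslant1$. With $\Tr{A}=1$ and $\Tr{B}\leqslant1$ this gives $\norm{P+Q}^2_2\leqslant4$, so $\norm{A-B}_1\leqslant2\norm{\sqrt{A}-\sqrt{B}}_2$, which squares to the claimed inequality. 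I would stress that the normalization $\Tr{B}\leqslant1$ is precisely what fixes the constant $\frac14$: it is the only place where the earlier trace computations enter, and without it the factor $\norm{P+Q}_2$ could not be held below $2$.
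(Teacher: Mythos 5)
Your proposal is correct and follows essentially the same route as the paper: the paper likewise combines the Hilbert--Schmidt lower bounds of Theorem~\ref{th:stronger-monotonicity}, Proposition~\ref{prop:mono-under-ptrace} and Corollary~\ref{th:newbound} with the comparison $\norm{M-N}_1\leqslant\norm{\sqrt{M}-\sqrt{N}}_2\,\norm{\sqrt{M}+\sqrt{N}}_2$ (its Proposition~\ref{prop:matrix-inequality}, proved by exactly your symmetrized factorization plus H\"older step) and the bound $\norm{\sqrt{M}+\sqrt{N}}_2\leqslant 2$. If anything, your explicit Cauchy--Schwarz justification of that last bound under $\Tr{B}\leqslant 1$ is slightly more careful than the paper's bare assertion that $\norm{\sqrt{\rho}+\sqrt{\sigma}}_2\in[\sqrt{2},2]$, which is phrased for normalized states even though the second operator in each application is only subnormalized.
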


\section{On some entropy inequalities}\label{sect:entropy-inequalities}

From our results from the previous sections and the following
result, we can derive many strengthened entropy inequalities.
\begin{prop}\label{prop:univ-result}
For a state $\rho\in\density{\cH_d}$ and a subnormalized state
$\sigma$ on $\cH_d$ (i.e. $\Tr{\sigma}\leqslant1$), it holds that
\begin{eqnarray}\label{eq:sub-normalized-state}
\rS(\rho||\sigma)&\geqslant& -2\log\Tr{\sqrt{\rho}\sqrt{\sigma}}\\
&\geqslant&\norm{\sqrt{\rho}- \sqrt{\sigma}}^2_2\\
&\geqslant& \frac14\norm{\rho-\sigma}^2_1.
\end{eqnarray}
In particular, $\rS(\rho||\sigma)=0$ if and only if $\rho=\sigma$.
\end{prop}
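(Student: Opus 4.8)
The plan is to prove the statement as a single chain of three inequalities and then read off the equality clause from that chain. The only genuinely analytic ingredient, the first inequality $\rS(\rho||\sigma) \geqslant -2\log\Tr{\sqrt{\rho}\sqrt{\sigma}}$, has already been established for \emph{normalized} states as \eqref{eq:renyi-divergence}; the one new feature here is that $\sigma$ may be subnormalized. I would dispose of this by a scaling argument. Writing $t = \Tr{\sigma} \leqslant 1$ and $\widehat\sigma = \sigma/t$, the normalized case gives $\rS(\rho||\widehat\sigma) \geqslant -2\log\Tr{\sqrt{\rho}\sqrt{\widehat\sigma}}$. Both sides transform identically under rescaling: by the identity recorded in Section~\ref{sect:Renyi}, $\rS(\rho||t\widehat\sigma) = \rS(\rho||\widehat\sigma) - \log t$, while $\sqrt{t\widehat\sigma} = \sqrt{t}\,\sqrt{\widehat\sigma}$ yields $-2\log\Tr{\sqrt{\rho}\sqrt{\sigma}} = -2\log\Tr{\sqrt{\rho}\sqrt{\widehat\sigma}} - \log t$. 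The two $\log t$ corrections cancel, so the first inequality holds verbatim for the subnormalized $\sigma$; in fact it is scale-invariant and needs no hypothesis on $\Tr{\sigma}$ at all.

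For the second inequality I would expand the Hilbert--Schmidt norm directly. Since $\sqrt{\rho}$ and $\sqrt{\sigma}$ are Hermitian, $\norm{\sqrt{\rho} - \sqrt{\sigma}}^2_2 = \Tr{\rho} + \Tr{\sigma} - 2\Tr{\sqrt{\rho}\sqrt{\sigma}} = 1 + \Tr{\sigma} - 2\Tr{\sqrt{\rho}\sqrt{\sigma}}$. Setting $x = \Tr{\sqrt{\rho}\sqrt{\sigma}}$ and invoking the elementary bound $\log x \leqslant x - 1$, one obtains $-2\log x \geqslant 2 - 2x$. Because $\Tr{\sigma} \leqslant 1$ we have $2 - 2x \geqslant 1 + \Tr{\sigma} - 2x$, hence $-2\log x \geqslant \norm{\sqrt{\rho} - \sqrt{\sigma}}^2_2$, which is precisely the claim. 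This is the single link where subnormalization is genuinely consumed, and it is the natural place to guard against a sign or normalization slip.

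The third inequality is then immediate from Proposition~\ref{prop:matrix-inequality}, whose estimate \eqref{eq:1-vs-2-norm} reads $\norm{\sqrt{\rho} - \sqrt{\sigma}}^2_2 \geqslant \norm{\rho - \sigma}^2_1 / \norm{\sqrt{\rho}+\sqrt{\sigma}}^2_2$; it remains only to bound the denominator. Here $\norm{\sqrt{\rho}+\sqrt{\sigma}}^2_2 = 1 + \Tr{\sigma} + 2\Tr{\sqrt{\rho}\sqrt{\sigma}}$, and Cauchy--Schwarz gives $\Tr{\sqrt{\rho}\sqrt{\sigma}} \leqslant \sqrt{\Tr{\rho}\,\Tr{\sigma}} = \sqrt{\Tr{\sigma}} \leqslant 1$, so $\norm{\sqrt{\rho}+\sqrt{\sigma}}^2_2 \leqslant 4$ and the prefactor is at least $\tfrac14$. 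The equality clause then falls out of the chain for free: if $\rS(\rho||\sigma) = 0$ then $\tfrac14\norm{\rho - \sigma}^2_1 \leqslant 0$, forcing $\rho = \sigma$ (and in particular $\Tr{\sigma} = 1$), while the converse is trivial since $\rS(\rho||\rho) = 0$. Overall I expect no serious obstacle: all the analytic content is imported from \eqref{eq:renyi-divergence} and Proposition~\ref{prop:matrix-inequality}, and the proposition reduces to carrying the constraint $\Tr{\sigma} \leqslant 1$ correctly through each link of the chain.
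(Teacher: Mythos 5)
Your proof is correct and takes essentially the same route as the paper, which assembles this proposition (without a separate proof environment) from exactly the ingredients you use: the scaling identity $\rS(\rho||\mu\sigma)=\rS(\rho||\sigma)-\log\mu$ to extend \eqref{eq:renyi-divergence} to subnormalized $\sigma$, the expansion of $\norm{\sqrt{\rho}-\sqrt{\sigma}}^2_2$ together with the elementary bound $\log x\leqslant x-1$ as in Remark~\ref{rem:1}, and Proposition~\ref{prop:matrix-inequality} with $\norm{\sqrt{\rho}+\sqrt{\sigma}}^2_2\leqslant 4$ as in the corollary of Section~\ref{sect:Main-result}. Your handling of where subnormalization is actually consumed (only in the second and third links) and the equality clause are both sound, so there is nothing to correct.
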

Clearly, all we need to do is to rewrite a related quantity as a
relative entropy with the second argument being a subnormalized
state. Then, Proposition~\ref{prop:univ-result} is applied to get
the desired inequality. From \cite{BSW}, we see that
\begin{eqnarray*}
&&\rS(\rho_{ABC}||\exp(\log\sigma_{AB} + \log\tau_{BC} - \log\omega_B))\\
&&= I(A:C|B)_\rho + \rS(\rho_{AB}||\sigma_{AB}) +
\rS(\rho_{BC}||\tau_{BC}) - \rS(\rho_B||\omega_B),
\end{eqnarray*}
where $\rho_{ABC}\in\density{\cH_{ABC}}$, $\sigma_{AC}
\in\density{\cH_{AC}},\tau_{BC}\in\density{\cH_{BC}}$, and
$\omega_C\in\density{\cH_C}$. This identity leads to the following
result:
\begin{eqnarray*}
&&\rS(\rho_{ABC}||\exp(\log\sigma_{AB} + \log\sigma_{BC} - \log\sigma_B))\\
&&= I(A:C|B)_\rho + \rS(\rho_{AB}||\sigma_{AB}) +
\rS(\rho_{BC}||\sigma_{BC}) - \rS(\rho_B||\sigma_B),
\end{eqnarray*}
where $\rho_{ABC}, \sigma_{ABC}\in\density{\cH_{ABC}}$. Using
monotonicity inequality of relative entropy, we have
$$
\rS(\rho_{AB}||\sigma_{AB}) \geqslant
\rS(\rho_B||\sigma_B)~~\text{and}~~\rS(\rho_{BC}||\sigma_{BC})
\geqslant \rS(\rho_B||\sigma_B).
$$
This yields that
$$
\frac12\Br{\rS(\rho_{AB}||\sigma_{AB}) +
\rS(\rho_{BC}||\sigma_{BC})} \geqslant \rS(\rho_B||\sigma_B).
$$
Therefore, we obtain the following result:
\begin{prop}\label{th:super-strong-additivity}
It holds that
\begin{eqnarray}
&&\rS(\rho_{ABC}||\exp(\log\sigma_{AB} + \log\sigma_{BC} - \log\sigma_B))\notag\\
&&\geqslant I(A:C|B)_\rho + \frac12\rS(\rho_{AB}||\sigma_{AB}) +
\frac12\rS(\rho_{BC}||\sigma_{BC}),
\end{eqnarray}
where $\rho_{ABC}, \sigma_{ABC}\in\density{\cH_{ABC}}$. In
particular, $\rS(\rho_{ABC}||\exp(\log\rho_{AB} + \log\rho_{BC} -
\log\rho_B)) \geqslant 0$, i.e. $I(A:C|B)_\rho\geqslant0$, the
strong subadditivity inequality. Moreover,
$\rS(\rho_{ABC}||\exp(\log\sigma_{AB} + \log\sigma_{BC} -
\log\sigma_B)) = 0$ if and only if $\rho_{ABC} =
\exp(\log\sigma_{AB} + \log\sigma_{BC} - \log\sigma_B)$.
\end{prop}
If $\rS(\rho_{ABC}||\exp(\log\sigma_{AB} + \log\sigma_{BC} -
\log\sigma_B)) = 0$, then using
Proposition~\ref{th:super-strong-additivity}, we have
\begin{eqnarray}
\begin{cases}
I(A:C|B)_\rho &= 0;\\
\rS(\rho_{AB}||\sigma_{AB}) &=0;\\
\rS(\rho_{BC}||\sigma_{BC}) &=0.
\end{cases}
\end{eqnarray}
This leads to the following:
\begin{eqnarray}
\rho_{AB} = \sigma_{AB},~~\rho_{BC} = \sigma_{BC}.
\end{eqnarray}
Thus, $\rho_B=\sigma_B$, which indicates that
$$
\exp(\log\rho_{AB} + \log\rho_{BC} - \log\rho_B) =
\exp(\log\sigma_{AB} + \log\sigma_{BC} - \log\sigma_B).
$$
Note that $I(A:C|B)_\rho = 0$ if and only if $\exp(\log\rho_{AB} +
\log\rho_{BC} - \log\rho_B) = \rho_{ABC}$. Therefore,
$\exp(\log\sigma_{AB} + \log\sigma_{BC} - \log\sigma_B) =
\rho_{ABC}$. From the above-mentioned process, it follows that
$$
\rS(\rho_{ABC}||\exp(\log\sigma_{AB} + \log\sigma_{BC} -
\log\sigma_B))=0 \Longrightarrow \rho_{ABC} = \exp(\log\sigma_{AB} +
\log\sigma_{BC} - \log\sigma_B).
$$
We know that, for any state $\sigma_{ABC}\in\density{\cH_{ABC}}$,
$$
\Tr{\exp(\log\sigma_{AB} + \log\sigma_{BC} - \log\sigma_B)}
\leqslant 1.
$$
But what will happen if $\Tr{\exp(\log\sigma_{AB} + \log\sigma_{BC}
- \log\sigma_B)} = 1$? In order to answer this question, we form an
operator for any state $\sigma_{ABC}\in\density{\cH_{ABC}}$, namely,
$$
\exp(\log\sigma_{AB} + \log\sigma_{BC} - \log\sigma_B).
$$
If $\exp(\log\sigma_{AB} + \log\sigma_{BC} - \log\sigma_B)$ is a
legitimate state, denoted by $\rho_{ABC}$, then
$$
\rho_{AB} = \ptr{C}{\exp(\log\sigma_{AB} + \log\sigma_{BC} -
\log\sigma_B)},~\rho_{BC} = \ptr{A}{\exp(\log\sigma_{AB} +
\log\sigma_{BC} - \log\sigma_B)},
$$
and $\rho_B = \ptr{AC}{\exp(\log\sigma_{AB} + \log\sigma_{BC} -
\log\sigma_B)}$. Furthermore, $\rS(\rho_{ABC}||\exp(\log\sigma_{AB}
+ \log\sigma_{BC} - \log\sigma_B))=0$. Thus, $I(A:C|B)_\rho = 0$,
i.e. $\exp(\log\sigma_{AB} + \log\sigma_{BC} - \log\sigma_B)$ is a
Markov state.
\begin{prop}
Given a state $\rho_{ABC}$, we form an operator $\exp(\log\rho_{AB}
+ \log\rho_{BC} - \log\rho_B)$. If
$$
\Tr{\exp(\log\rho_{AB} + \log\rho_{BC} - \log\rho_B)}=1,
$$
then the following statements are true:
\begin{enumerate}[(i)]
\item $\exp(\log\rho_{AB} + \log\rho_{BC} - \log\rho_B) =
\rho^{1/2}_{AB}\rho^{-1/2}_B\rho_{BC}\rho^{-1/2}_B\rho^{1/2}_{AB}$;
\item $\exp(\log\rho_{AB} + \log\rho_{BC} - \log\rho_B) =
\rho^{1/2}_{BC}\rho^{-1/2}_B\rho_{AB}\rho^{-1/2}_B\rho^{1/2}_{BC}$.
\end{enumerate}
Therefore, $\exp(\log\rho_{AB} + \log\rho_{BC} - \log\rho_B)$ must
be a Markov state.
\end{prop}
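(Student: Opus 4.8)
The plan is to exploit the constructed operator as a bona fide state and feed it back into the additivity estimate of Proposition~\ref{th:super-strong-additivity}. Write $\omega_{ABC} := \exp(\log\rho_{AB} + \log\rho_{BC} - \log\rho_B)$. The hypothesis $\Tr{\omega_{ABC}} = 1$, together with positivity of the exponential, means that $\omega_{ABC}$ is itself a genuine density operator in $\density{\cH_{ABC}}$, so every relative entropy appearing below is taken between legitimate states and no subnormalization issue arises.

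First I would apply Proposition~\ref{th:super-strong-additivity} with the first argument equal to $\omega_{ABC}$ and the second data state equal to the original $\rho_{ABC}$. The decisive point is that the operator sitting inside the relative entropy on the left, namely $\exp(\log\rho_{AB} + \log\rho_{BC} - \log\rho_B)$, is \emph{by definition} exactly $\omega_{ABC}$; hence the left-hand side collapses to $\rS(\omega_{ABC}||\omega_{ABC}) = 0$. The inequality therefore degenerates into
$$
0 \;\geqslant\; I(A:C|B)_\omega + \tfrac12\rS(\omega_{AB}||\rho_{AB}) + \tfrac12\rS(\omega_{BC}||\rho_{BC}).
$$
Since each of the three summands on the right is nonnegative, every one of them must vanish. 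From $I(A:C|B)_\omega = 0$ I conclude that $\omega_{ABC}$ is a quantum Markov state; from $\rS(\omega_{AB}||\rho_{AB}) = 0$ and $\rS(\omega_{BC}||\rho_{BC}) = 0$ I conclude $\omega_{AB} = \rho_{AB}$ and $\omega_{BC} = \rho_{BC}$, and tracing out $A$ (or $C$) once more gives $\omega_B = \rho_B$. Thus the marginals of $\omega_{ABC}$ coincide with those of the original state $\rho$.

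Finally I would invoke Zhang's characterization of vanishing conditional mutual information \cite{Zhang2013}: for the Markov state $\omega_{ABC}$ the vanishing of $I(A:C|B)_\omega$ is equivalent to both $\omega_{ABC} = \omega_{AB}^{1/2}\omega_B^{-1/2}\omega_{BC}\omega_B^{-1/2}\omega_{AB}^{1/2}$ and $\omega_{ABC} = \omega_{BC}^{1/2}\omega_B^{-1/2}\omega_{AB}\omega_B^{-1/2}\omega_{BC}^{1/2}$. Substituting the marginal identities $\omega_{AB} = \rho_{AB}$, $\omega_{BC} = \rho_{BC}$, $\omega_B = \rho_B$ established above turns these two formulas into precisely statements (i) and (ii), while the Markov property is the concluding assertion.

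The main obstacle — indeed the one idea the whole argument turns on — is recognizing the second step: placing the freshly built operator $\omega_{ABC}$ into the \emph{first} slot of the relative entropy, so that the additivity estimate degenerates into $0 \geqslant (\text{sum of nonnegative terms})$, which then simultaneously forces the marginal matchings $\omega_{AB} = \rho_{AB}$, $\omega_{BC} = \rho_{BC}$ and the Markov property. One should also verify that the inverse square roots $\rho_B^{-1/2}$ are well defined on the relevant supports; this is automatic from the block-diagonal structure of Markov states guaranteed by the Hayden \emph{et al.} decomposition, so no separate regularization is required.
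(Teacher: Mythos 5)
Your proposal is correct and follows essentially the same route as the paper: the paper's own (informal) proof, given in the discussion immediately preceding the proposition, likewise treats the unit-trace operator $\omega_{ABC}=\exp(\log\rho_{AB}+\log\rho_{BC}-\log\rho_B)$ as a genuine state, places it in the first slot of Proposition~\ref{th:super-strong-additivity} so that the left-hand side becomes $\rS(\omega_{ABC}||\omega_{ABC})=0$, and concludes $I(A:C|B)_\omega=0$ together with $\omega_{AB}=\rho_{AB}$, $\omega_{BC}=\rho_{BC}$, $\omega_B=\rho_B$, after which Zhang's characterization of vanishing conditional mutual information yields (i), (ii), and the Markov property. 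The only cosmetic difference is that you state the appeal to Zhang's two explicit formulas more carefully than the paper does.
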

From the above result, we see that if a state $\rho_{ABC}$ can be
expressed by the form of $\exp(\log\sigma_{AB} + \log\sigma_{BC} -
\log\sigma_B)$ for another state $\sigma_{ABC}$, then $\rho_{ABC}$
must ba a Markov state. A question naturally arises: Which states
$\rho_{ABC}$ are such that $\exp(\log\rho_{AB} + \log\rho_{BC} -
\log\rho_B)$ is a Markov state? It would be interesting to figure
out the structure of the following set:
\begin{eqnarray}
\Set{\rho_{ABC}\in\density{\cH_{ABC}}: \Tr{\exp(\log\rho_{AB} +
\log\rho_{BC} - \log\rho_B)}=1}.
\end{eqnarray}

\begin{thrm}
For any tripartite states $\rho_{ABC},
\sigma_{ABC},\tau_{ABC},\omega_{ABC}\in\density{\cH_{ABC}}$, if
$\sigma_B=\tau_B$ or $\tau_B=\omega_B$, then
\begin{eqnarray}
&&\rS(\rho_{ABC}||\exp(\log\sigma_{AB} - \log\tau_B + \log\omega_{BC}))\notag\\
&&\geqslant -2\log\Tr{\sqrt{\rho_{ABC}}\sqrt{\exp(\log\sigma_{AB} - \log\tau_B + \log\omega_{BC})}}\\
&&\geqslant\norm{\sqrt{\rho_{ABC}} - \sqrt{\exp(\log\sigma_{AB} - \log\tau_B + \log\omega_{BC})}}^2_2\\
&&\geqslant \frac14\norm{\rho_{ABC} - \exp(\log\sigma_{AB} -
\log\tau_B + \log\omega_{BC})}^2_1.
\end{eqnarray}
\end{thrm}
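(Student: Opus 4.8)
The plan is to observe that the asserted four-term chain is exactly Proposition~\ref{prop:univ-result} evaluated at $\rho = \rho_{ABC}$ and at the operator
\[
\sigma := \exp(\log\sigma_{AB} - \log\tau_B + \log\omega_{BC}).
\]
Indeed, Proposition~\ref{prop:univ-result} already furnishes, for any state $\rho$ and any subnormalized operator $\sigma$, the chain
\[
\rS(\rho||\sigma) \geqslant -2\log\Tr{\sqrt{\rho}\sqrt{\sigma}} \geqslant \norm{\sqrt{\rho} - \sqrt{\sigma}}^2_2 \geqslant \frac14\norm{\rho - \sigma}^2_1,
\]
and this coincides term by term with the statement once the substitution is made. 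Hence the entire content to be verified is the single hypothesis of that proposition: that the operator $\sigma$ above is a bona fide subnormalized state.

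First I would dispose of positivity, which is automatic: under the suppressed-tensor conventions each of $\log\sigma_{AB}$, $\log\tau_B$, $\log\omega_{BC}$ is Hermitian, so $\log\sigma_{AB} - \log\tau_B + \log\omega_{BC}$ is Hermitian and its exponential $\sigma$ is positive definite. What remains is the trace bound $\Tr{\sigma} \leqslant 1$.

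This trace bound is precisely Remark~\ref{remark-1} after a relabeling of its three auxiliary states. Remark~\ref{remark-1} states that $\Tr{\exp(\log\rho_{AB} - \log\sigma_B + \log\tau_{BC})} \leqslant 1$ whenever $\rho_B = \sigma_B$ or $\sigma_B = \tau_B$. Matching the remark's states $\rho_{ABC}, \sigma_{ABC}, \tau_{ABC}$ to the present $\sigma_{ABC}, \tau_{ABC}, \omega_{ABC}$ respectively, the remark's operator becomes $\exp(\log\sigma_{AB} - \log\tau_B + \log\omega_{BC}) = \sigma$, while its hypothesis ``$\rho_B = \sigma_B$ or $\sigma_B = \tau_B$'' becomes ``$\sigma_B = \tau_B$ or $\tau_B = \omega_B$'' --- exactly the hypothesis of the theorem. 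Therefore $\Tr{\sigma} \leqslant 1$, and Proposition~\ref{prop:univ-result} applies verbatim to yield all four inequalities simultaneously.

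I do not expect a genuine obstacle: the real work --- the Lieb-concavity and unitary-averaging computation that produces the subnormalization --- has already been extracted into Remark~\ref{remark-1}, and the three norm comparisons are packaged in Proposition~\ref{prop:univ-result}. The only points requiring care are the bookkeeping that aligns the theorem's disjunctive condition with the remark's, and the standard technical caveat that the states be of full rank on the relevant supports so that every logarithm appearing is well defined.
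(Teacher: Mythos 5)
Your proposal is correct and follows essentially the same route as the paper's own proof: the paper likewise obtains the subnormalization $\Tr{\exp(\log\sigma_{AB}-\log\tau_B+\log\omega_{BC})}\leqslant 1$ from Remark~\ref{remark-1} and then applies Proposition~\ref{prop:univ-result} to conclude the four-term chain. The only difference is that you spell out the relabeling of the remark's states and the positivity of the exponential explicitly, which the paper leaves implicit.
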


\begin{proof}
Since $\Tr{\exp(\log\sigma_{AB} - \log\tau_B +
\log\omega_{BC})}\leqslant1$ (see Remark~\ref{remark-1}), that is
$\exp(\log\sigma_{AB} - \log\tau_B + \log\omega_{BC})$ is a
subnormalized state, it follows from \eqref{eq:sub-normalized-state}
that the desired inequality is correct.
\end{proof}

\begin{prop}
For a tripartite state $\rho_{ABC}\in\density{\cH_{ABC}}$, it holds
that
\begin{eqnarray}
\rS(\rho_{AB}) + \rS(\rho_{BC}) - \rS(\rho_{ABC}) &\geqslant&
-2\log\Tr{\sqrt{\rho_{ABC}}\sqrt{\exp(\log\rho_{AB}+\log\rho_{BC})}}\\
&\geqslant&\norm{\sqrt{\rho_{ABC}} - \sqrt{\exp(\log\rho_{AB} + \log\rho_{BC})}}^2_2\\
&&\geqslant \frac14\norm{\rho_{ABC} - \exp(\log\rho_{AB} +
\log\rho_{BC})}^2_1.
\end{eqnarray}
\end{prop}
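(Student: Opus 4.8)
The plan is to recognize the entire left-hand side as a single relative entropy $\rS(\rho_{ABC}||\sigma)$ whose second argument $\sigma := \exp(\log\rho_{AB}+\log\rho_{BC})$ is a subnormalized state, and then to invoke Proposition~\ref{prop:univ-result} verbatim, exactly as was done to obtain Corollary~\ref{th:newbound}. Here, as throughout the paper, $\log\rho_{AB}$ abbreviates $(\log\rho_{AB})\ot\I_C$ and $\log\rho_{BC}$ abbreviates $\I_A\ot(\log\rho_{BC})$. Assuming the states are faithful (the general case following by continuity on the supports), one has $\log\sigma = \log\rho_{AB}+\log\rho_{BC}$, so that the rewriting is the routine bookkeeping
\begin{eqnarray*}
\rS(\rho_{ABC}||\sigma) &=& \Tr{\rho_{ABC}\log\rho_{ABC}} - \Tr{\rho_{ABC}(\log\rho_{AB}\ot\I_C)} - \Tr{\rho_{ABC}(\I_A\ot\log\rho_{BC})}\\
&=& \Tr{\rho_{ABC}\log\rho_{ABC}} - \Tr{\rho_{AB}\log\rho_{AB}} - \Tr{\rho_{BC}\log\rho_{BC}}\\
&=& \rS(\rho_{AB}) + \rS(\rho_{BC}) - \rS(\rho_{ABC}),
\end{eqnarray*}
where the middle step is just the defining property of the partial trace, $\Tr{\rho_{ABC}(X_{AB}\ot\I_C)} = \Tr{\rho_{AB}X_{AB}}$.

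The substance is to verify the hypothesis of Proposition~\ref{prop:univ-result}, namely $\Tr{\exp(\log\rho_{AB}+\log\rho_{BC})}\leqslant 1$, and this is where the work lies. My plan is to reuse the unitary-averaging argument from the proof of Corollary~\ref{th:newbound}, but applied \emph{twice}. First I would average the $\rho_{BC}$-factor over $\unitary{d_C}$: conjugation by $\I_{AB}\ot U_C$ fixes the $\rho_{AB}$-part and leaves the trace unchanged, so the quantity equals its average over $U_C$, and Lieb's concavity theorem (Theorem~\ref{th:Lieb-concavity}) lets me pull the average inside the logarithm, where $\int_{\unitary{d_C}} U_C\rho_{BC}U_C^\dagger\,dU_C = \rho_B\ot(\I_C/d_C)$. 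A short computation collapses the expression to $\Ptr{AB}{\exp(\log\rho_{AB}+\I_A\ot\log\rho_B)}$. Repeating the identical maneuver on system $A$, using $\int_{\unitary{d_A}}U_A\rho_{AB}U_A^\dagger\,dU_A = (\I_A/d_A)\ot\rho_B$, collapses it further to $\Tr{\rho_B^2}\leqslant 1$. One may instead shortcut both twirls by a single application of the Golden--Thompson inequality, which gives $\Tr{\exp(\log\rho_{AB}+\log\rho_{BC})}\leqslant\Tr{(\rho_{AB}\ot\I_C)(\I_A\ot\rho_{BC})} = \Tr{\rho_B^2}$; but the twirling route keeps everything within the toolkit already used in the paper.

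With subnormalization in hand, the three displayed inequalities are then literally the three lines of Proposition~\ref{prop:univ-result} applied to $\rho=\rho_{ABC}$ and $\sigma=\exp(\log\rho_{AB}+\log\rho_{BC})$. I expect the main obstacle to be precisely the bound $\Tr{\exp(\log\rho_{AB}+\log\rho_{BC})}\leqslant 1$: unlike the conditional-mutual-information case of Corollary~\ref{th:newbound}, there is no $-\log\rho_B$ term to cancel the $\log\rho_B$ produced by the first twirl, so a single averaging is not enough and one is left with the residual $\Ptr{AB}{\exp(\log\rho_{AB}+\I_A\ot\log\rho_B)}$ that must be tamed by a second twirl (or by Golden--Thompson). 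Once that is settled, everything else is automatic.
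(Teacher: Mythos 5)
Your proposal is correct, and its skeleton --- rewrite $\rS(\rho_{AB})+\rS(\rho_{BC})-\rS(\rho_{ABC})$ as $\rS(\rho_{ABC}||\exp(\log\rho_{AB}+\log\rho_{BC}))$, verify that the second argument is subnormalized, then quote Proposition~\ref{prop:univ-result} --- is exactly the paper's. The only divergence is in how the subnormalization $\Tr{\exp(\log\rho_{AB}+\log\rho_{BC})}\leqslant 1$ is established: the paper uses precisely your ``shortcut,'' namely Golden--Thompson followed by $\Tr{(\rho_{AB}\ot\I_C)(\I_A\ot\rho_{BC})}=\Tr{\rho_B^2}\leqslant 1$, and that two-line argument is its entire proof of this step. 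Your primary route --- twirling over $\unitary{d_C}$ and then over $\unitary{d_A}$, each time invoking Theorem~\ref{th:Lieb-concavity} with Jensen's inequality to pull the Haar average inside the logarithm, collapsing the trace to $\Tr{\exp(\log\rho_{AB}+\I_A\ot\log\rho_B)}$ and then to $\Tr{\rho_B^2}$ --- is a valid alternative; I checked that both twirls go through (the fixed term is unitarily invariant under the relevant conjugation in each step, and the averaged factor becomes $\rho_B\ot\I_C/d_C$, respectively $\I_A/d_A\ot\rho_B$). What your route buys is consistency with the paper's own rhetoric: Section~\ref{sect:Main-result} advertises getting rid of Golden--Thompson, and the double twirl keeps that promise using only the Lieb-concavity machinery already deployed for \eqref{eq:less1} and Proposition~\ref{prop:mono-under-ptrace}, whereas the paper quietly reverts to Golden--Thompson for this one proposition. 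What the paper's (and your shortcut) route buys is brevity. You also correctly identify that the absence of the $-\log\rho_B$ term is what forces a second twirl here, a point the paper never makes explicit.
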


\begin{proof}
All we need to do is to rewrite $\rS(\rho_{AB}) + \rS(\rho_{BC}) -
\rS(\rho_{ABC})$ as a relative entropy:
$$
\rS(\rho_{AB}) + \rS(\rho_{BC}) - \rS(\rho_{ABC}) =
\rS(\rho_{ABC}||\exp(\log\rho_{AB}+\log\rho_{BC})).
$$
Next, we prove that $\exp(\log\rho_{AB}+\log\rho_{BC})$ is a
subnormalized state. Using Golden-Thompson inequality, we have
\begin{eqnarray}
\Tr{\exp(\log\rho_{AB}+\log\rho_{BC})}&\leqslant&
\Tr{\exp(\log\rho_{AB})\exp(\log\rho_{BC})}\\
&\leqslant& \Tr{\rho_{AB}\rho_{BC}} = \Tr{\rho^2_B}\leqslant 1.
\end{eqnarray}
This completes the proof.
\end{proof}

Further comparison with the inequalities in
\cite{Kim2012,Ruskai2012} would be interesting and is left for the
future research.

\section{Discussion and concluding remarks}\label{sect:Remarks}

The lower bound in \eqref{eq:conditional-lower-bound} is clearly
independent of any measurement, compared with \eqref{eq:Fernando}
and \eqref{eq:keli}. Since the trace-norm decreases under generic
quantum channels, in particular under partial trace, it follows that
\begin{eqnarray}
E_{sq}(\rho_{AC}) \geqslant \frac18 \norm{\rho_{AC} -
\Ptr{B}{\exp\Pa{\log\rho_{AB} + \log\rho_{BC} -\log\rho_B}}}^2_1,
\end{eqnarray}
where $E_{sq}$ is an entanglement measure, i.e. \emph{squashed
entanglement}, defined by
\begin{eqnarray}
E_{sq}(\rho_{AC}) = \inf\Set{\frac12I(A:C|B)_\rho:
\ptr{B}{\rho_{ABC}}=\rho_{AC}},
\end{eqnarray}
where the infimum is taken over all possible extensions $\rho_{ABC}$
of $\rho_{AC}$. Since the squashed entanglement is a \emph{faithful}
measure, i.e. $E_{sq}(\rho_{AC}) = 0$ if and only if $\rho_{AC}$ is
a separable state, it follows that if $\rho_{AC}$ is separable, then
there exists an extension $\rho_{ABC}$ of $\rho_{AC}$ such that
\begin{eqnarray}
\rho_{AC} = \Ptr{B}{\exp\Pa{\log\rho_{AB} + \log\rho_{BC}
-\log\rho_B}}.
\end{eqnarray}
Equivalently, if $\rho_{AC}\neq\Ptr{B}{\exp\Pa{\log\rho_{AB} +
\log\rho_{BC} -\log\rho_B}}$ for any extension $\rho_{ABC}$ of
$\rho_{AC}$, then $\rho_{AC}$ must be entangled. We would like to
know wether or not
\begin{eqnarray}
\rho_{AC}\quad\text{is separable if and only if}\quad
\rho_{AC}=\ptr{B}{\exp\Pa{\log\rho_{AB} + \log\rho_{BC}-\log\rho_B}}
\end{eqnarray}
for some extension $\rho_{ABC}$ of $\rho_{AC}$. From this
observation, one sees that finding some properties of the following
operators is a very interesting subject:
$$
\exp\Pa{\log\rho_{AB} + \log\rho_{BC}
-\log\rho_B},~~\Ptr{B}{\exp\Pa{\log\rho_{AB} + \log\rho_{BC}
-\log\rho_B}}.
$$
Taking the partial traces of tripartite operators maybe important in
the investigation of entanglement theory. It will provide new
insights in understanding entanglement.

For instance, consider the \emph{generalized Lie-Trotter product
formula} \cite{Suzuki}: for any $k$ matrices $A_1,\ldots,A_k$, it
holds that
$$
\lim_{n\to\infty}\Pa{\exp(A_1/n)\exp(A_2/n)\cdots\exp(A_k/n)}^n =
\exp(A_1+A_2+\cdots+A_k).
$$
This leads to the following identities:
\begin{eqnarray}
\exp(\log\rho_{AB} - \log\rho_B + \log\rho_{BC}) &=&
\lim_{n\to\infty}
\Pa{\rho^{1/2n}_{AB}\rho^{-1/2n}_B\rho^{1/n}_{BC}\rho^{-1/2n}_B\rho^{1/2n}_{AB}}^n\\
&=&\lim_{n\to\infty}
\Pa{\rho^{1/2n}_{BC}\rho^{-1/2n}_B\rho^{1/n}_{AB}\rho^{-1/2n}_B\rho^{1/2n}_{BC}}^n.
\end{eqnarray}
We wonder whether the sequence
$\Tr{\Pa{\rho^{1/2n}_{AB}\rho^{-1/2n}_B\rho^{1/n}_{BC}\rho^{-1/2n}_B\rho^{1/2n}_{AB}}^n}$
is \emph{monotone} in $n$ and is no more than one. I proposed this
conjecture in the previous version of the present paper, and
luckily, Datta and Wilde \cite{DW} gave a positive answer to my
question partially. In fact, they found that a key result should be
cited. That is, if $M\in M_d(\complex)$ and $\alpha\geqslant1$, then
the map $X\mapsto \Tr{\Br{MX^{1/\alpha}M^\dagger}^\alpha}$ is
concave over $\pd{\complex^d}$ \cite{CL}. In what follows, we may
obtain the following:
\begin{eqnarray*}
1&=&\Tr{\Pa{\rho^{1/2n}_{AB}\rho^{-1/2n}_B\Br{\rho_B\ot\I_C/d_C}^{1/n}\rho^{-1/2n}_B\rho^{1/2n}_{AB}}^n}\\
&=&\Tr{\Pa{\rho^{1/2n}_{AB}\rho^{-1/2n}_B\Br{\int_{\unitary{d_C}}
U_C\rho_{BC}U^\dagger_C
\dif U_C}^{1/n}\rho^{-1/2n}_B\rho^{1/2n}_{AB}}^n}\\
&\geqslant&\int_{\unitary{d_C}}\Tr{\Pa{\rho^{1/2n}_{AB}\rho^{-1/2n}_B\Br{
U_C\rho_{BC}U^\dagger_C
}^{1/n}\rho^{-1/2n}_B\rho^{1/2n}_{AB}}^n}\dif U_C\\
&=&\Tr{\Pa{\rho^{1/2n}_{AB}\rho^{-1/2n}_B\rho^{1/n}_{BC}\rho^{-1/2n}_B\rho^{1/2n}_{AB}}^n}.
\end{eqnarray*}
Note here that the concavity of the map $X\mapsto
\Tr{\Br{MX^{1/\alpha}M^\dagger}^\alpha}$ is very essential here. The
following inequality
\begin{eqnarray}\label{eq:tr-ineq}
\Tr{\Pa{\rho^{1/2n}_{AB}\rho^{-1/2n}_B\rho^{1/n}_{BC}\rho^{-1/2n}_B\rho^{1/2n}_{AB}}^n}\leqslant
1~~(\forall n\geqslant1)
\end{eqnarray}
directly leads to another proof of the fact that
$\Tr{\exp(\log\rho_{AB} - \log\rho_B + \log\rho_{BC})}\leqslant1$ by
taking the limit of Eq.~\eqref{eq:tr-ineq} when $n\to\infty$. We can
use the same technique to get a more general result: If
$\rho_{ABC},\sigma_{ABC},\tau_{ABC}\in\density{\cH_{ABC}}$ and
$\rho_B=\sigma_B$ or $\sigma_B=\tau_B$, then
\begin{eqnarray}
\Tr{\Pa{\rho^{1/2n}_{AB}\sigma^{-1/2n}_B\tau^{1/n}_{BC}\sigma^{-1/2n}_B\rho^{1/2n}_{AB}}^n}\leqslant
1~~(\forall n\geqslant1).
\end{eqnarray}

Based on this result and Stinespring's dilation representation of
completely positive maps, Datta and Wilde \cite[Eq.~(3.50)]{DW} gave
the proof of the following inequality:
\begin{eqnarray}
\Tr{\exp\Pa{\log\sigma + \Phi^*(\log\Phi(\rho)) -
\Phi^*(\log\Phi(\sigma))}}\leqslant1.
\end{eqnarray}
It is natural to ask here the implication of the saturation:
\begin{eqnarray}
\Tr{\exp\Pa{\log\sigma +
\Phi^*(\log\Phi(\rho)) - \Phi^*(\log\Phi(\sigma))}}=1.
\end{eqnarray}
It is left \emph{open} for the future research. Furthermore, we get
the following improvement of monotonicity of relative entropy:
\begin{eqnarray}
\rS(\rho||\sigma) - \rS(\Phi(\rho)||\Phi(\sigma))\geqslant
\frac14\norm{\rho - \exp\Pa{\log\sigma + \Phi^*(\log\Phi(\rho)) -
\Phi^*(\log\Phi(\sigma))}}^2_1.
\end{eqnarray}
Clearly the above result is not applicable in the present form since
the operator
$$
\exp\Pa{\log\sigma + \Phi^*(\log\Phi(\rho)) -
\Phi^*(\log\Phi(\sigma))}
$$
may not be a valid state. We propose the
following \emph{open} questions:
\begin{eqnarray}
\rS(\rho||\sigma) - \rS(\Phi(\rho)||\Phi(\sigma))&\geqslant&
\frac14\norm{\rho - \Phi^*_\sigma\circ\Phi(\rho)}^2_1,\label{eq:stronger-mono}\\
\rS(\rho_{AB}||\sigma_{AB}) - \rS(\rho_A||\sigma_A) &\geqslant&
\frac14\norm{\rho_{AB} - \sigma^{1/2}_{AB}\sigma^{-1/2}_A\rho_A\sigma^{-1/2}_A\sigma^{1/2}_{AB}}^2_1,\\
I(A:C|B)_\rho &\geqslant&\frac14 \norm{\rho_{ABC} -
\rho^{1/2}_{AB}\rho^{-1/2}_B\rho_{BC}\rho^{-1/2}_B\rho^{1/2}_{AB}}^2_1.
\end{eqnarray}
Apparently, \eqref{eq:stronger-mono} implies the remaining two
inequalities. Further investigations on these topics are in order.
We hope that the results obtained in our work shed new light over
related subjects in quantum information theory.


\subsection*{Acknowledgements}
The author thanks the anonymous referee for helpful comments on our
manuscript, and he also thanks Uttam Singh and Zhaoqi Wu for various
helpful discussions and suggestions. The work is supported by
National Natural Science Foundation of China (No.11301124).


\end{document}